\renewcommand{\(}{\left(}
\renewcommand{\)}{\right)}
\renewcommand{\[}{\left[}
\renewcommand{\]}{\right]}
\renewcommand{\j}{\mathbf{j}}
\renewcommand{\S}{\mathbf{S}}
\newcommand{\y}{\mathbf{y}}
\newcommand{\E}{\mathbf{E}}
\renewcommand{\r}{\mathbf{r}}
\newcommand{\Z}{\mathbf{Z}}
\newcommand{\x}{\mathbf{x}}
\newcommand{\I}{\mathbf{I}}
\newcommand{\J}{\mathbf{J}}
\newcommand{\C}{\mathbf{C}}
\renewcommand{\P}{\mathbf{P}}
\newcommand{\A}{\mathbf{A}}
\newcommand{\U}{\mathbf{U}}
\newcommand{\M}{\mathbf{M}}
\newcommand{\Q}{\mathbf{Q}}
\newcommand{\K}{\mathbf{K}}
\newcommand{\q}{\mathbf{q}}
\newcommand{\X}{\mathbf{X}}
\newcommand{\Y}{\mathbf{Y}}
\newcommand{\B}{\mathbf{B}}
\newcommand{\e}{\mathbf{e}}
\newcommand{\Tr}[1]{{\rm{Tr}}\left(#1\right)}
\newcommand{\End}[1]{{\rm{End}}}
\renewcommand{\log}[1]{{\rm{log}}#1}
\newcommand{\vecc}[1]{{\rm{vec}}\(#1\)}
\newcommand{\rank}[1]{{\rm{rank}}\(#1\)}
\newcommand{\sspan}[1]{\langle #1\rangle }
\newtheorem{lemma}{Lemma}
\newtheorem{definition}{Definition}
\newtheorem{theorem}{Theorem}
\newtheorem{corollary}{Corollary}
\newcommand{\norm}[1]{\left\lVert#1\right\rVert}
\begin{document}
\title{Group Symmetric Robust Covariance Estimation}

\author{Ilya Soloveychik, Dmitry Trushin and Ami Wiesel, \\ The Hebrew University of Jerusalem, Israel
\thanks{This work was supported by the Intel Collaboration Research Institute for Computational Intelligence, the Kaete Klausner Scholarship and ISF Grant 786/11.}
\thanks{The results were partially presented at the $1$-st IEEE Global Conference on Signal and Information Processing, December, 3-5, 2013, Austin, Texas, USA and at the $40$-th IEEE International Conference on Acoustics, Speech, and Signal Processing, April, 19-24, 2015, Brisbane, Australia.}
}
\maketitle

\begin{abstract}
In this paper we consider Tyler's robust covariance M-estimator under group symmetry constraints. We assume that the covariance matrix is invariant to the conjugation action of a unitary matrix group, referred to as group symmetry. Examples of group symmetric structures include circulant, perHermitian and proper quaternion matrices. We introduce a group symmetric version of Tyler's estimator (STyler) and provide an iterative fixed point algorithm to compute it. The classical results claim that at least $n=p+1$ sample points in general position are necessary to ensure the existence and uniqueness of Tyler's estimator, where $p$ is the ambient dimension. We show that the STyler requires significantly less samples. In some groups even two samples are enough to guarantee its existence and uniqueness. In addition, in the case of elliptical populations, we provide high probability bounds on the error of the STyler. These too, quantify the advantage of exploiting the symmetry structure. Finally, these theoretical results are supported by numerical simulations.
\end{abstract}

\begin{IEEEkeywords}
Robust covariance matrix M-estimators, group symmetry, Tyler's estimator, structured covariance estimation.
\end{IEEEkeywords}

\section{Introduction}
Covariance matrix estimation is a fundamental problem in the field of statistical signal processing. Many algorithms for hypothesis testing, inference, denoising and prediction rely on accurate estimation of the second order statistic. The problem is especially challenging when the available data is high dimensional and non-Gaussian. Such settings are typical in many applications including speech, radar, wireless communication, finance and more. These led to a growing interest in both robust and structured covariance estimation. 

In robust statistics the non-Gaussian data is usually modeled using heavy-tailed distributions and outlier contamination \cite{huber1964robust}. These approaches have been found useful in different fields of statistical signal processing. Examples include $K$-distributed populations in the area of radar detection \cite{billingsley1993ground,watts1985radar}, Weibull distributions in biostatistics and radar detection \cite{farina1987radar}. An elegant multivariate model is provided by elliptically and Generalized Elliptically (GE) distributed random vectors \cite{frahm2004generalized}. These encompass a large number of non-Gaussian distributions including Gaussian, generalized Gaussian and compound Gaussian processes. Elliptical populations have been used in various problems such as bandlimited speech processing \cite{rupp1994analysis}, radar clutter echoes \cite{conte1994performance, gini1997sub}, wireless radio fading propagation problems \cite{abdi2003expected, yao2004unified}, anomaly detection in wireless sensor networks \cite{chen2011robust}, antenna array processing \cite{ollila2003robust}, radar detection \cite{abramovich2007time, ollila2012complex, bandiera2010knowledge, pascal2008covariance} and financial engineering  \cite{belkacem2000capm}.
A prominent robust covariance estimator motivated by this models is Maronna's M-estimator, \cite{maronna1976robust}. Later, Tyler developed a closely related distribution free M-estimator in \cite{tyler1987distribution}, which has become very popular and was extended to the complex case in \cite{pascal2008covariance}. Given $n$ i.i.d. (independent and identically distributed) measurements $\x_i \in \mathbb{C}^p,\;i=1,\dots,n$, Tyler's covariance matrix estimator is defined as the solution to the fixed point equation
\begin{equation}
\widehat{\bm\Theta} = \frac{p}{n}\sum_{i=1}^n \frac{\x_i\x_i^H}{\x_i^H\widehat{\bm\Theta}^{-1}\x_i}.
\label{tylerequ}
\end{equation}
The conditions under which Tyler's estimator exists and is unique have been thoroughly investigated  \cite{maronna1976robust, tyler1987distribution, pascal2008covariance, zhang2012robust}. One of the forms of sufficient conditions states that Tyler's estimator exists and is unique if $n > p$ and the data vectors lie in general position (i.e. any subset of $k\leqslant p$ vectors is linearly independent), which holds a.s. (almost surely) in the elliptical populations mentioned above. When this condition holds, the estimator can be easily computed using a simple fixed point iteration, \cite{tyler1987distribution, pascal2008covariance, zhang2012robust}. Recently, it was shown that the estimator is actually the solution to a $g$-convex maximum likelihood estimator (MLE) in Angular models, and that the iteration is a simple descent algorithm \cite{rapcsak1991geodesic, wiesel2012geodesic, wiesel2012unified, wiesel2011regularized}. In these models, high probability bounds were derived, \cite{soloveychik2014non}. They state that the Frobenius norm squared error of the matrix $\widehat{\bm\Theta}^{-1}$ decays like $p^2/n$ as long as $n>p$.

The above results all demonstrate the success of Tyler's estimator when $n \gg p$. On the other hand, in high dimensional cases where $p$ is large compared to $n$, the performance degrades, and when $n < p$, the estimator does not even exist. This is in particular the case in financial data analysis where few stationary monthly observations of numerous stock indexes are used to estimate the joint covariance matrix of the stock returns \cite{laloux2000random, ledoit2003improved} and bioinformatics where clustering of genes is obtained based on gene sequences sampled from a small population \cite{schafer2005shrinkage}. Additional applications include computational immunology where correlations among mutations in viral strains are estimated from sampled viral sequences and used as a basis of novel vaccine design \cite{dahirel2011coordinate, quadeer2013statistical}, psychology where the covariance matrix of multiple psychological traits is estimated from data collected on a group of tested individuals \cite{steiger1980tests} and electrical engineering at large where signal samples extracted from a possibly short time window are used to retrieve parameters of the signal \cite{scharf1991statistical}. This led to a body of literature on regularized and structured Tyler's estimator in sample starved scenarios. One of the common ways of introducing the prior information into covariance estimation is shrinkage towards a known matrix. The robust analogs of the shrunk sample covariance matrix \cite{ledoit2003improved} were developed in \cite{abramovich2007diagonally, chen2011robust, wiesel2012unified}. Another popular approach is to assume the true covariance matrix to possess linear structure \cite{snyder1989use, abramovich2007time, wiesel2013time, dembo1989embedding, cai2013optimal, bickel2008regularized}. Probably the most popular structure is the Toeplitz model \cite{snyder1989use, abramovich2007time, wiesel2013time}, closely related to it are circulant matrices \cite{dembo1989embedding, cai2013optimal}. In other settings the number of parameters can be reduced by assuming that the covariance matrix is sparse. A popular sparse model is the banded covariance, which is associated with time-varying moving average models \cite{bickel2008regularized,wiesel2013time}. Recently, a heuristic based on the method of moments was proposed to incorporate convex constraints into Tyler's estimator, \cite{soloveychik2014tyler}.

Our work was motivated by the paper \cite{shah2012group} which considered group symmetry structures in Gaussian populations. Given a finite group of unitary matrices $\mathcal{G}$, a covariance matrix $\bm\Theta$ is $\mathcal{G}$-invariant if it satisfies
\begin{equation}
\bm\Theta = \K^H \bm\Theta \K,\quad\forall \K \in \mathcal{G}.
\end{equation}
These structures are ubiquitous in statistical signal processing. Examples of group symmetric classes include circulant, perHermitian, and proper quaternion matrices. As expected, $\mathcal{G}$-invariance reduces the number of degrees of freedom in $\bm\Theta$. Group representation theory quantifies this reduction via two effects: (1) block sparsity and (2) block replication. Rigorously, any $\mathcal{G}$-invariant structure is equivalent (up to a change of basis) to a block diagonal matrix with, possibly, multiple identical blocks. By exploiting this property, \cite{shah2012group} showed that the Gaussian covariance MLE can be significantly outperformed. 

The goal of this paper is to exploit group symmetry in non-Gaussian distributions. We derive and analyze a symmetric version of Tyler's estimator, named STyler. We define the estimator by applying the original definition on synthetically $\mathcal{G}$-rotated samples, and obtain a $\mathcal{G}$-invariant estimator
\begin{equation}
\widehat{\bm\Theta}^\mathcal{G} = \frac{p}{n|\mathcal{G}|}\sum_{i=1}^n \sum_{\K \in \mathcal{G}} \frac{(\K\x_i) (\K\x_i)^H}{(\K\x_i)^H\[\widehat{\bm\Theta}^\mathcal{G}\]^{-1}\K\x_i}.
\label{def_ste_fp_intro}
\end{equation}
We derive conditions for the existence and uniqueness of the STyler, which are are directly related to the intrinsic degrees of freedom associated with the sparsity and replication parameters. We show that compared to the classical Tyler's estimator, demanding at least $n = p+1$ independent samples to guarantee the existence and uniqueness, the STyler requires a much less number of samples, with the exact value depending on the algebraic properties of the group $\mathcal{G}$ introduced below. We develop a simple fixed point iteration for computing the estimator and prove its convergence. In addition to the existence and uniqueness issues, an important criterion of evaluating the power of an estimator is its performance characteristics under specific population model assumptions. In this direction we provide high-probability error bounds demonstrating STyler's performance advantages in group symmetric elliptically contoured distributions. Namely, we show that the probability of large deviations of the inverse of the STyler is no longer governed by the high ambient dimension, as it is in the original Tyler's estimator, \cite{soloveychik2014non}, but rater by the lower intrinsic dimension. Below we deduce the exact dimensionality gain for general groups and calculate its value for the most common examples. Therefore, our derivation precisely quantifies the intuition that a smaller number of samples is required to achieve the same accuracy as in the unconstrained case.

In essence, we extend the contributions of \cite{shah2012group} on Gaussian group symmetry structures to a more complicated robust case. The extension is challenging due to three main reasons. First, the Gaussian MLE, namely the sample covariance, has a simple closed form, whereas Tyler's estimator is an iterative solution which is harder to analyze. Gaussian covariance estimation methods with and without regularization, are all based on the sample covariance which is a simple sufficient statistic. There is no such sufficient statistic in the non-Gaussian case, and the estimators must work with the samples themselves. Second, in Gaussian models, block sparsity in the covariance implies statistical independence, and this is not true in Angular models. This too complicates the analysis. Third, Gaussian MLE based optimizations are convex in the inverse covariance, and can therefore easily exploit symmetry constraints which are linear in the inverse covariance. Tyler's optimization is not convex, but $g$-convex and it is not clear whether it can exploit linear constraints. Our results show the symmetry constraints are also $g$-convex, and this is the underlying principle that allows us to exploit them efficiently. In particular, we demonstrate that being a solution to the constrained Tyler's optimization, the STyler is an MLE of a certain multivariate population, and therefore asymptotically reaches the Cramer-Rao lower Bound (CRB).

The rest of the text is organized as follows. First, we introduce necessary notations, definitions and auxiliary results. Then we demonstrate the power of the proposed method in the Gaussian setting and provide common examples of group symmetric structures in Section \ref{gr_sym_s}. In Section \ref{st_sec} we introduce the STyler estimator, formulate and prove our main results in Theorem \ref{eq_lem} and provide a brief discussion of the theorem. Section \ref{schur_sec} is devoted to the performance analysis of the STyler in elliptical populations and its main claim is Theorem \ref{thm:main_res}. Finally, we demonstrate the benefits of the suggested techniques by numerical simulations and provide a Conclusion section.

\subsection{Notation}
Denote by $\mathcal{S}(p)$ the linear space of $p \times p$ Hermitian matrices, by $\mathcal{P}(p) \subset \mathcal{S}(p)$ the cone of positive definite matrices. Other linear spaces and subspaces are denoted by capital letters, such as $V$. $\I_p$ or $\I$ denote the identity matrix of a proper dimension. For a matrix $\M$, we denote by $\M^T$ its transpose and by $\M^H$ its complex conjugate transpose. For a complex number $c$, $\bar c$ denots its conjugate and $j$ stands for the imaginary unit. Given a subset $\X$ of a linear space, $\sspan{\X}$ denotes the subspace it spans. Given a subspace $V$ of a unitary space, we denote by $\Pi_{V}$ the orthogonal projection operator onto $V$ or its matrix in a given basis. For a finite set $\mathcal{F},\; |\mathcal{F}|$ stands for its cardinality.

We endow $\mathcal{S}(p)$ with the scalar product $(\A,\B) = \Tr{\A\B}$, which induces Frobenius norm $\norm{\cdot}_F$ on it. $\norm{\cdot}$ will denote the Euclidean norm for vectors and $\norm{\cdot}_2$ - spectral norm for matrices. The linear space $\mathbb{C}^p$ is treated as a column vector space with the standard inner product. For a matrix $\A \in \mathcal{P}(p)$, denote by $\lambda_{\min}(\A)$ and $\lambda_{\max}(\A)$ its minimal and maximal eigenvalues, correspondingly. Given a square matrix $\A$, $|\A|$ stands for its determinant. $U(p)$ denotes the set of all $p \times p$ unitary matrices.

\section*{Acknowledgment}
The authors are grateful to Teng Zhang and the anonymous reviewers for their useful remarks and important suggestions, which helped us to significantly improve the paper.

\section{Group Symmetric Structure}
\label{gr_sym_s}
In this section we define group symmetric matrix sets, discuss their main properties and provide popular examples.

\begin{definition}
A set $\mathcal{G} \subset U(p)$ is referred to as a unitary matrix group, if $\I \in \mathcal{G}$, and for any $\U_1, \U_2 \in \mathcal{G},\; \U_1\cdot\U_2 \in \mathcal{G}$ together with $\U_1^{-1} \in \mathcal{G}$. Such a group is denoted by $\mathcal{G} \leqslant U(p)$.
\end{definition}
Given a set $\mathcal{F} \subset \mathbb{C}^p$, denote by $\mathcal{G}\mathcal{F} = \cup_{\K \in \mathcal{G}}\K\mathcal{F}$ its orbit under the group action. Below we consider only finite groups, i.e. groups with finite number of elements.

\begin{definition}
Let $\mathcal{G} \leqslant U(p)$ be a finite unitary group. Given a set $\mathcal{V} \subset \mathcal{S}(p)$ we denote by
\begin{equation}
\mathcal{V}^\mathcal{G} = \{\M \in \mathcal{V}\mid \K^H\M \K = \M, \forall \K \in \mathcal{G}\},
\label{inv_cond}
\end{equation}
its subset of matrices fixed by the conjugation action of the group $\mathcal{G}$.
\end{definition}

An equivalent definition is: $\M$ belongs to $\mathcal{V}^\mathcal{G}$ iff it commutes with all the elements of $\mathcal{G}$. Indeed, for any $\K\in U(p)$ and an arbitrary $\M\in\mathcal{S}(p)\;\colon\;\K^H \M \K=\M \Leftrightarrow \M \K= \K \M$.

Group symmetry can also be stated in terms of the inverse. For $\M\in\mathcal{P}(p)$ the condition (\ref{inv_cond}) is equivalent to $\K^H\M^{-1} \K = \M^{-1}, \; \forall \K \in \mathcal{G}$. Thus an invertible matrix is $\mathcal{G}$-invariant together with its inverse. Below we also make use of the following
\begin{definition}
\label{inv_ss}
We say that $L \subset \mathbb{C}^p$ is a $\mathcal G$-invariant subspace if for any $\x\in L$ and $\K \in \mathcal G$, $\K\x \in L$.
\end{definition}
Note that in Definition \ref{inv_ss} we require the subspace to coincide with its image under the group action, however we allow the images of individual vectors $\K\x$ not to be collinear with the vectors $\x$ themselves.

As explained in the introduction, the optimization machinery behind Tyler's estimator is the $g$-convexity of the associated negative log-likelihood function, \cite{rapcsak1991geodesic, wiesel2012geodesic, wiesel2012unified, wiesel2011regularized}. Just like classical convexity, this guarantees uniqueness and convergence properties of a $g$-convex optimization program over $g$-convex sets. Definitions and a brief review on $g$-convexity is available in the Appendix \ref{g_app}. For the purposes of our paper, the following result states the group symmetry constraints are also $g$-convex, and can be efficiently exploited in Tyler's estimator. 

%The next important geometric property is one of our results. It relates the group symmetry of the set $\mathcal{P}(p)^\mathcal{G}$ to its geodesic ($g$)-convexity. The notion of $g$-convexity is a generalization of the usual convexity from linear spaces with Euclidean metric, to a more general class of manifolds with Riemannian metrics. We bring in a few necessary definitions in the Appendix, more verbose and detailed expositions on $g$-convexity can be found in \cite{rapcsak1991geodesic, wiesel2012geodesic}.
\begin{theorem}
\label{g_th}
The set $\mathcal{P}(p)^\mathcal{G}$ is $g$-convex with respect to the Riemannian metric over the manifold $\mathcal{P}(p)$.
\begin{proof}
The proof can be found in Appendix \ref{g_app}.
\end{proof}
\end{theorem}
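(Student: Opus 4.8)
The plan is to use the explicit description of the geodesics of $\mathcal{P}(p)$ under the affine-invariant Riemannian metric reviewed in Appendix \ref{g_app}. Recall that the geodesic joining two points $\A, \B \in \mathcal{P}(p)$ is
\begin{equation}
\gamma(t) = \A^{1/2}\(\A^{-1/2}\B\A^{-1/2}\)^t\A^{1/2}, \quad t \in [0,1],
\end{equation}
so $\mathcal{P}(p)^\mathcal{G}$ is $g$-convex precisely when, for every pair $\A, \B \in \mathcal{P}(p)^\mathcal{G}$, the whole curve $\gamma(t)$ stays in $\mathcal{P}(p)^\mathcal{G}$. Since $\gamma(t)$ is automatically positive definite, the only thing to verify is the invariance $\K^H\gamma(t)\K = \gamma(t)$ for all $\K \in \mathcal{G}$ and all $t$.

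The key observation is that conjugation by a unitary matrix commutes with the functional calculus: if $\K \in U(p)$ and $f$ is any function defined on the spectrum of a Hermitian matrix $\M$, then $\K^H f(\M)\K = f(\K^H\M\K)$, as one checks immediately from the spectral decomposition of $\M$ together with the unitarity of $\K$. Consequently, whenever $\A$ satisfies $\K^H\A\K = \A$, every power satisfies $\K^H\A^{s}\K = \A^{s}$ as well; in particular the factors $\A^{1/2}$ and $\A^{-1/2}$ appearing in $\gamma(t)$ are again $\mathcal{G}$-invariant. First I would record this fact, as it is what lets the symmetry pass through the square roots and the fractional power.

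Next I would insert $\K\K^H = \I$ between the three factors of $\gamma(t)$ and evaluate each block separately. For a fixed $\K \in \mathcal{G}$,
\begin{equation}
\K^H\gamma(t)\K = \(\K^H\A^{1/2}\K\)\(\K^H\(\A^{-1/2}\B\A^{-1/2}\)^t\K\)\(\K^H\A^{1/2}\K\).
\end{equation}
The outer factors equal $\A^{1/2}$ by the functional-calculus remark. For the middle factor the same remark gives $\K^H\(\A^{-1/2}\B\A^{-1/2}\)^t\K = \(\K^H\A^{-1/2}\B\A^{-1/2}\K\)^t$, and expanding the base as $(\K^H\A^{-1/2}\K)(\K^H\B\K)(\K^H\A^{-1/2}\K) = \A^{-1/2}\B\A^{-1/2}$, using the $\mathcal{G}$-invariance of both $\A$ and $\B$, recovers $\(\A^{-1/2}\B\A^{-1/2}\)^t$. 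Multiplying the three pieces yields $\K^H\gamma(t)\K = \gamma(t)$, which is the claim.

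Conceptually, this is the statement that each congruence map $\M \mapsto \K^H\M\K$ is an isometry of $\mathcal{P}(p)$ whose fixed-point set is totally geodesic, and $\mathcal{P}(p)^\mathcal{G}$ is the intersection over $\K \in \mathcal{G}$ of these fixed-point sets, an intersection of $g$-convex sets being $g$-convex. I do not anticipate a serious obstacle: the only delicate point is justifying that unitary conjugation commutes with the fractional power $(\cdot)^t$ in the geodesic, which is exactly the functional-calculus identity above; everything else is a direct substitution.
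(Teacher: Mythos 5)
Your proposal is correct and follows essentially the same route as the paper's proof in Appendix \ref{g_app}: both verify directly that the geodesic $\M_0^{1/2}\(\M_0^{-1/2}\M_1\M_0^{-1/2}\)^t\M_0^{1/2}$ inherits $\mathcal{G}$-invariance, the key fact in each case being that the functional calculus (the powers $(\cdot)^{1/2}$, $(\cdot)^{-1/2}$, $(\cdot)^t$) respects the symmetry. The only cosmetic difference is that the paper phrases invariance as commutation of $\M_0,\M_1$ with each $\K\in\mathcal{G}$, while you phrase it as conjugation invariance $\K^H(\cdot)\K$ and insert $\K\K^H=\I$ between factors --- an equivalence the paper itself records at the outset of its proof.
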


Given a finite unitary group $\mathcal{G}$, it is always possible to construct an orthonormal basis in which all the $\mathcal{G}$-invariant matrices have a certain block-diagonal structure depending only on $\mathcal{G}$. The formal statement reads as
\begin{theorem} (Classification Theorem, \cite{curtis1962representation})
\label{rep_basis}
Let $\mathcal{G}\leqslant U(p)$ be a finite unitary group acting on a unitary space $\mathbb{C}^p$, then there exists an orthonormal basis $\Q=\[\q_1,\dots,\q_p\]$ in $\mathbb{C}^p$ such that in this basis
%\begin{enumerate}
%\item any element $\K \in \mathcal{G}$ is of the form
%\begin{equation}
%\K =
%  \begin{bmatrix}
%    \K_1 & 0  & \dots & 0\\
%    0 & \K_2 & \dots & 0\\
%    \vdots & \vdots & \ddots & \vdots\\
%    0 & 0 & \dots & \K_m\\
%  \end{bmatrix},
%\end{equation}
%where each block $\K_i$ reads as
%\begin{equation}
%\K_i = \H_i \otimes \I_{s_i},\;\; i=1,\dots,m,
%\end{equation}
%with $\H_i \in U(p_i)$ and, therefore,
%\begin{equation}
%p = \sum_{i=1}^m p_is_i.
%\end{equation}
%\item 
any matrix $\A \in \mathcal{S}(p)^{\mathcal{G}}$ reads as
\begin{equation}
\A=
  \begin{bmatrix}
    \A_1 & 0  & \dots & 0\\
    0 & \A_2 & \dots & 0\\
    \vdots & \vdots & \ddots & \vdots\\
    0 & 0 & \dots & \A_m\\
  \end{bmatrix},
\label{shur_f}
\end{equation}
where the blocks $\A_i$ have the following structure:
\begin{equation}
\A_i=\I_{p_i} \otimes \B_i,\;\; i=1,\dots,m,
\end{equation}
with $\B_i \in \mathcal{S}(s_i)$. In particular, each block $\A_i$ is of size $p_is_i \times p_is_i$.
%\end{enumerate}
\end{theorem}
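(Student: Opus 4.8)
The plan is to recognize this as the isotypic (canonical) decomposition theorem for the unitary representation of $\mathcal{G}$ on $\mathbb{C}^p$, and to read the block structure off the commutant of that representation. The starting observation, already recorded in the excerpt, is that $\A \in \mathcal{S}(p)^{\mathcal{G}}$ precisely when $\A$ commutes with every $\K \in \mathcal{G}$; that is, the invariant Hermitian matrices are exactly the Hermitian elements of the commutant algebra $\mathrm{End}_{\mathcal{G}}(\mathbb{C}^p)$ of $\mathcal{G}$-equivariant operators. Thus the theorem asserts that a single orthonormal basis, depending only on $\mathcal{G}$ and not on $\A$, simultaneously block-diagonalizes this entire commutant in the displayed Kronecker form, where $p_i$ turns out to be the dimension of the $i$-th irreducible type occurring in $\mathbb{C}^p$ and $s_i$ its multiplicity.

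First I would establish complete reducibility. Because $\mathcal{G}$ is finite and acts by unitary matrices, the orthogonal complement of any $\mathcal{G}$-invariant subspace is again $\mathcal{G}$-invariant, so $\mathbb{C}^p$ splits as an orthogonal direct sum of irreducible $\mathcal{G}$-invariant subspaces. Collecting the summands according to isomorphism type gives the isotypic decomposition $\mathbb{C}^p = \bigoplus_{i=1}^m U_i$ with $U_i \cong V_i^{\oplus s_i}$, where the $V_i$ are pairwise non-isomorphic irreducibles of dimension $p_i$ and multiplicity $s_i$. The key auxiliary fact is that distinct isotypic components are mutually orthogonal and that each is preserved by every equivariant operator; both follow from Schur's lemma, which forbids nonzero equivariant maps between non-isomorphic irreducibles, combined with the fact that orthogonal projection onto an invariant subspace is itself equivariant.

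Next I would apply Schur's lemma twice to pin down the commutant. Any $\A \in \mathrm{End}_{\mathcal{G}}(\mathbb{C}^p)$ carries each isotypic component into itself and annihilates the cross terms between distinct components, which yields the block-diagonal form $\A = \bigoplus_i \A_i$ with $\A_i$ acting on $U_i$. Writing $U_i \cong V_i \otimes \mathbb{C}^{s_i}$, on which $\mathcal{G}$ acts as $\rho_i \otimes \I_{s_i}$, the complex version of Schur's lemma identifies the equivariant endomorphisms of $U_i$ with $\mathbb{C}\I_{p_i} \otimes \mathrm{End}(\mathbb{C}^{s_i})$; hence $\A_i = \I_{p_i} \otimes \B_i$ for some $\B_i$ on the multiplicity space $\mathbb{C}^{s_i}$, and the Hermiticity of $\A$ forces $\B_i \in \mathcal{S}(s_i)$. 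This is exactly the stated block replication by the factor $p_i$ together with a free Hermitian block $\B_i$ of size $s_i$.

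The remaining work, and the part I expect to be the main obstacle, is to turn these abstract identifications into an honest orthonormal basis $\Q$ in which the $\A_i$ become the literal Kronecker matrices $\I_{p_i} \otimes \B_i$. The subtlety is that the factorization $U_i \cong V_i \otimes \mathbb{C}^{s_i}$ and the isomorphisms between the $s_i$ copies of $V_i$ are a priori only linear, whereas the Kronecker form in an orthonormal basis requires these equivariant isomorphisms to be unitary. This is where unitarity of the representation enters essentially: a nonzero intertwiner between two unitary copies of the same irreducible is, by Schur, a scalar multiple of a unitary map, so after normalization one transports a single orthonormal basis of one copy of $V_i$ equivariantly and isometrically onto all the others. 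Concatenating the resulting orthonormal bases over the mutually orthogonal components $U_1,\dots,U_m$ produces the desired orthonormal basis $\Q = [\q_1,\dots,\q_p]$, depending only on $\mathcal{G}$, in which every invariant $\A$ simultaneously assumes the block form (\ref{shur_f}).
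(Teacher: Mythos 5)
The paper does not actually prove this theorem: it quotes it as the Classification Theorem from Curtis--Reiner, remarking only that the decomposition follows from the Wedderburn Structure Theorem and Maschke's Theorem, so there is no in-paper proof to compare against. Your argument is correct and is precisely the standard proof underlying that citation -- complete reducibility of the unitary representation (Maschke, via invariance of orthogonal complements), orthogonality and invariance of the isotypic components, and Schur's lemma identifying the commutant of each component $V_i \otimes \mathbb{C}^{s_i}$ with $\I_{p_i} \otimes \mathrm{End}(\mathbb{C}^{s_i})$ -- and you correctly flag the one genuinely delicate point, namely that the equivariant isomorphisms between the $s_i$ copies of each irreducible must be rescaled to be unitary (via $T^{H}T = c\I$ from Schur) so that the resulting basis $\Q$ is orthonormal and independent of $\A$.
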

The values of $s_i$ and $p_i$ are completely determined by the group $\mathcal{G}$. The exact decomposition and ways to find it follow from the Wedderburn Structure Theorem and Maschke Theorem, \cite{feit1982representation, curtis1962representation}\footnote{Note that the Classification Theorem is only valid over algebraically closed fields, such as the field of complex numbers. This is the reason why we consider Tyler's complex version. A real valued extension of the Classification Theorem is discussed in \cite{maehara2010numerical}.}. Below, we list the specific values for structures which are common in modern covariance estimation.

The Classification Theorem plays a crucial role in group symmetry covariance estimation. It means that group symmetry implies sparsity and replication if the basis is appropriately chosen. This reduces the number of intrinsic degrees of freedom. In particular, let $\M \in \mathcal{S}(p)$ and partition it as
\begin{multline}
\M=
  \begin{bmatrix}
    \M_{11} & \M_{12}  & \dots & \M_{1m} \\
    \M_{21} & \M_{22} & \dots & \M_{2m}\\
    \vdots & \vdots & \ddots & \vdots\\
    \M_{m1} & \M_{m2} & \dots & \M_{mm}\\
  \end{bmatrix}, \\
\text{ where }
\M_{ii} =
  \begin{bmatrix}
    \M_{11}^i & \M_{12}^i  & \dots & \M_{1p_i}^i \\
    \M_{21}^i & \M_{22}^i & \dots & \M_{2p_i}^i\\
    \vdots & \vdots & \ddots & \vdots\\
    \M_{p_i1}^i & \M_{p_i2}^i & \dots & \M_{p_ip_i}^i\\
  \end{bmatrix}.
\end{multline}
Denote the orthogonal projection $\Pi_{\mathcal{S}(p)^\mathcal{G}} \colon \mathcal{S}(p) \to \mathcal{S}(p)^\mathcal{G}$, also referred to as the Reynolds operator \cite{reynolds1894dynamical}, by $\Pi_{\mathcal{G}}$. One can show that it is given by averaging over the group action,
\begin{equation}
\Pi_{\mathcal{G}}(\M) = \frac{1}{|\mathcal{G}|}\sum_{\K \in \mathcal{G}} \K\M\K^H.
\label{rop}
\end{equation}
Now, in the basis $\Q$ specified by Theorem \ref{g_th},
\begin{equation}
\Pi_{\mathcal{G}}(\M) =
  \begin{bmatrix}
    \Pi_{\mathcal{G}}(\M_{11}) & 0  & \dots & 0 \\
    0 & \Pi_{\mathcal{G}}(\M_{22}) & \dots & 0\\
    \vdots & \vdots & \ddots & \vdots\\
    0 & 0 & \dots & \Pi_{\mathcal{G}}(\M_{mm})\\
  \end{bmatrix},
\end{equation}
where 
\begin{equation}
\Pi_{\mathcal{G}}(\M_{ii}) = \I_{p_i} \otimes \frac{1}{p_i}\sum_{j=1}^{p_i} \M_{jj}^i,\;\; i=1,\dots,m.
\end{equation}
In other words, the projection replaces the off-diagonal blocks with zeros and the diagonal ones by their average. This suggests the definition of two quantities. First, the sparsity factor is the number of nonzero elements divided by the total number of elements:
\begin{equation}
\rho(\mathcal{G}) = \frac{\sum_{i=1}^m p_i s_i^2}{p^2}.
\end{equation}
Second, the degrees of freedom factor, which takes the averaging into account, and is the ratio between the intrinsic degrees of freedom and the ambient dimension:
\begin{equation}
\delta(\mathcal{G}) = \frac{ \max_{i=1}^m \frac{s_i}{p_i}}{p}.
\end{equation}
The main message of this paper is that estimators that exploit group symmetry enjoy these gains in their existence, uniqueness and sample complexity properties. 

The following lemma quantifies the gain in rank by applying the Reynolds operator to a rank one random matrix.
\begin{lemma}
\label{spanlem}
Let $\x\in \mathbb C^p$ be continuously distributed, and $\Pi_i$ - the $i$-th diagonal block of matrix $\Pi_{\sspan{\mathcal{G}\x}}$, as in (\ref{shur_f}), then a.s.
\begin{equation}
\label{comp_size}
\rank{\Pi_i} = p_i\min[s_i,p_i],\;\; \forall i = 1,\dots, m,
\end{equation}
and, therefore,
\begin{equation}
\dim\sspan{\mathcal{G}\x} = \rank{\Pi_{\sspan{\mathcal{G}\x}}} =\sum_{i=1}^m p_i\min[s_i,p_i].
\end{equation}
\end{lemma}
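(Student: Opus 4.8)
The plan is to reduce the global rank computation to one carried out separately on each isotypic component and there to invoke the density (Burnside) theorem. Since $\sspan{\mathcal{G}\x}$ is a $\mathcal G$-invariant subspace, its orthogonal projector $\Pi_{\sspan{\mathcal{G}\x}}$ commutes with every $\K\in\mathcal G$ and is therefore, in the basis $\Q$ of Theorem \ref{rep_basis}, block diagonal with blocks $\Pi_i$ supported on the $i$-th isotypic component $W_i$ of $\mathbb C^p$. First I would verify that this block-diagonal form forces $\sspan{\mathcal{G}\x}=\bigoplus_{i=1}^m\(\sspan{\mathcal{G}\x}\cap W_i\)$ and that $\sspan{\mathcal{G}\x}\cap W_i=\sspan{\mathcal{G}\x^{(i)}}$, where $\x^{(i)}=\Pi_{W_i}\x$ is the component of $\x$ in $W_i$ (using that $\Pi_{W_i}$ commutes with the group action). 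Consequently $\Pi_i$ is precisely the projector onto $\sspan{\mathcal{G}\x^{(i)}}$ inside $W_i$, so that $\rank{\Pi_i}=\dim\sspan{\mathcal{G}\x^{(i)}}$ and the problem decouples across $i$.

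On the component $W_i$ the group acts, in the notation of Theorem \ref{rep_basis}, as $\rho_i\otimes\I_{s_i}$ with $\rho_i$ an irreducible representation of dimension $p_i$ occurring with multiplicity $s_i$; accordingly I would identify $W_i\cong\mathbb C^{p_i}\otimes\mathbb C^{s_i}$ with the space of $p_i\times s_i$ matrices, writing $\x^{(i)}$ as a matrix $\X_i$ on which $\K$ acts by left multiplication $\X_i\mapsto\rho_i(\K)\X_i$. The key step is then Burnside's density theorem: because $\rho_i$ is irreducible over the algebraically closed field $\mathbb C$, the span of $\{\rho_i(\K):\K\in\mathcal G\}$ is the entire matrix algebra $\mathrm{Mat}_{p_i}(\mathbb C)$. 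Hence $\sspan{\mathcal{G}\x^{(i)}}=\{\M\X_i:\M\in\mathrm{Mat}_{p_i}(\mathbb C)\}$, the space of all $p_i\times s_i$ matrices whose rows lie in the row space of $\X_i$. A short kernel count for the map $\M\mapsto\M\X_i$ (its kernel consists of matrices whose rows sit in the $(p_i-\rank{\X_i})$-dimensional left null space of $\X_i$) then gives $\dim\sspan{\mathcal{G}\x^{(i)}}=p_i\rank{\X_i}$.

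It remains to determine $\rank{\X_i}$ almost surely. The map $\x\mapsto\X_i$ is a linear surjection of $\mathbb C^p$ onto the space of $p_i\times s_i$ matrices, and the locus of matrices of rank strictly less than $\min[s_i,p_i]$ is a proper algebraic subvariety, hence of Lebesgue measure zero; since $\x$ is continuously distributed, the preimage of this locus is null, so $\rank{\X_i}=\min[s_i,p_i]$ a.s. This yields $\rank{\Pi_i}=p_i\min[s_i,p_i]$, and summing the block ranks (the projector being block diagonal) gives $\dim\sspan{\mathcal{G}\x}=\rank{\Pi_{\sspan{\mathcal{G}\x}}}=\sum_{i=1}^m p_i\min[s_i,p_i]$. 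I expect the main obstacle to lie not in Burnside's theorem itself but in the clean decoupling of the first paragraph, namely verifying that the $i$-th block of the global projector really is the projector onto the orbit span of the single component $\x^{(i)}$, together with making the ``general position'' almost-sure statement precise in the representation-theoretic coordinates.
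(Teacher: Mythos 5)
Your argument is correct, but it takes a genuinely different route from the paper's. The paper's proof is a three-line computation: it observes that $\Pi_{\sspan{\mathcal{G}\x}}$ is the orthogonal projector onto the image of the Reynolds average $\Pi_{\mathcal{G}}(\x\x^H)$, invokes the explicit block formula $\Pi_{\mathcal{G}}(\M_{ii}) = \I_{p_i}\otimes\frac{1}{p_i}\sum_{j=1}^{p_i}\M_{jj}^i$ already displayed in Section \ref{gr_sym_s} to get $\rank{\Pi_i} = p_i\cdot\rank{\sum_{j}\M_{jj}^i}$, and finishes by asserting that a sum of $p_i$ generic rank-one $s_i\times s_i$ positive semidefinite matrices has rank $\min[s_i,p_i]$ a.s. You never touch the averaging formula: you decouple the orbit span across isotypic components and compute $\dim\sspan{\mathcal{G}\x^{(i)}} = p_i\,\rank{\X_i}$ via Burnside's density theorem and a kernel count. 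The two proofs bottom out in the same genericity fact, since your $\rank{\X_i}$ is exactly the rank of the paper's $\sum_j\M_{jj}^i = \X_i^H\X_i$ (up to conjugation conventions), but the reduction mechanisms differ: Reynolds averaging of a rank-one matrix versus density of the irreducible image in $\mathrm{Mat}_{p_i}(\mathbb{C})$. Your version is more self-contained modulo Theorem \ref{rep_basis} (the paper's projection formula is itself an unproved Schur-lemma consequence), it makes explicit why $\rank{\Pi_i}$ must be a multiple of $p_i$, and your linear-surjection/subvariety argument makes precise the a.s. step that the paper merely asserts; the cost is length. One step to tighten: $\sspan{\mathcal{G}\x}\cap W_i = \sspan{\mathcal{G}\x^{(i)}}$ does not follow from commutation of $\Pi_{W_i}$ with the group action alone — commutation gives $\Pi_{W_i}\sspan{\mathcal{G}\x} = \sspan{\mathcal{G}\x^{(i)}}$, but you also need $\Pi_{W_i}$ to map the invariant subspace $\sspan{\mathcal{G}\x}$ into itself, which holds because $\Pi_{W_i}$ is a central idempotent of the group algebra (equivalently, any subrepresentation splits compatibly into its isotypic parts, so $L = \oplus_i (L\cap W_i)$ for invariant $L$). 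You correctly flagged this as the delicate point, and with that standard fact inserted the proof goes through.
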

\begin{proof}
Denote $\M = \x\x^H$, then $\Pi_{\sspan{\mathcal{G}\x}}$ is the orthogonal projector onto the image of $\Pi_{\mathcal G}(\M)$. Now we infer that
\begin{equation}
\rank{\Pi_i} = \rank{\Pi_{\mathcal{G}}(\M_{ii})} = p_i \cdot \rank{\frac{1}{p_i}\sum_{j=1}^{p_i} \M_{jj}^i}.
\end{equation}
Note that $\rank{\sum_{j=1}^{p_i} \M_{jj}^i} = \min[s_i, p_i]$ with probability one for a continuously distributed $\x$, to get the desired.
\end{proof}

The power of Lemma \ref{spanlem} can be illustrated on the sample covariance matrix as detailed in \cite{shah2012group}. Given $n$ i.i.d. zero mean Gaussian vectors $\X = \{\x_1,\dots,\x_n\,|\,\x_i \in \mathbb{C}^p,\, i=1,\dots,n\}$, the most common covariance estimator is the sample covariance
\begin{equation}
\label{SCM}
\S =   \frac{1}{n}\sum_{i=1}^n \x_i\x_i^H.
\end{equation}
It is well known that $\S$ is a.s. of full rank when
\begin{equation}
 n\geqslant p.
\end{equation}
In group symmetric distributions, \cite{shah2012group}  proposed to improve this estimator using Reynold's averaging
\begin{equation}
\label{SCMRey}
\S^{\mathcal{G}} = \Pi_{\mathcal{G}}(\S) = \frac{1}{n|\mathcal{G}|}\sum_{i=1}^n \sum_{\K \in \mathcal{G}} \K\x_i\x_i^H\K^H.
\end{equation}

\begin{corollary}
\label{lemma:isotypicBound}
Let $\X\subseteq \mathbb C^p$ be a set of independent continuously distributed vectors, and $\Pi_i$ - the $i$-th block of matrix $\Pi_{\sspan{\mathcal{G}\X}}$, as in (\ref{shur_f}), then 
\begin{equation}
\rank{\Pi_i} = p_i\min[s_i,np_i],\;\; \forall i = 1,\dots, m,\;\;\text{a.s.},
\label{cor_f_e}
\end{equation}
\begin{equation}
\rank{\S^{\mathcal{G}}} = \sum_{i=1}^m p_i\min[s_i,n p_i],\;\;\text{a.s.}
\label{scm_g_d}
\end{equation}
\end{corollary}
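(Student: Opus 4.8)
The plan is to follow the argument of Lemma \ref{spanlem} almost verbatim, replacing the single rank-one matrix $\x\x^H$ by the sample covariance $\S=\frac{1}{n}\sum_{k=1}^n\x_k\x_k^H$. The first step is to identify $\sspan{\mathcal{G}\X}$ with the image of $\S^{\mathcal{G}}=\Pi_{\mathcal{G}}(\S)$. Indeed, $\S^{\mathcal{G}}$ is a sum of the positive semidefinite matrices $\K\x_k\x_k^H\K^H$, and for positive semidefinite summands the image of the sum equals the sum of the images; hence the image of $\S^{\mathcal{G}}$ is $\ssp{\{\K\x_k\}}=\sspan{\mathcal{G}\X}$. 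Consequently $\Pi_{\sspan{\mathcal{G}\X}}$ is the orthogonal projector onto $\mathrm{im}(\S^{\mathcal{G}})$, so $\rank{\Pi_{\sspan{\mathcal{G}\X}}}=\rank{\S^{\mathcal{G}}}$, and, working in the basis $\Q$ of the Classification Theorem, the block-diagonal form of $\S^{\mathcal{G}}$ gives $\rank{\Pi_i}=\rank{\Pi_{\mathcal{G}}(\S_{ii})}$ for each $i$. This reduces the whole statement to computing the rank of each diagonal block of $\S^{\mathcal{G}}$ and summing.

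Next I would apply the Reynolds-operator block formula. Writing the $i$-th diagonal block of each $\x_k$ as a stack of $p_i$ subvectors $\x_{k,i1},\dots,\x_{k,ip_i}\in\mathbb{C}^{s_i}$, the formula for $\Pi_{\mathcal{G}}(\M_{ii})$ applied to $\S$ yields
\begin{equation}
\Pi_{\mathcal{G}}(\S_{ii})=\I_{p_i}\otimes\frac{1}{np_i}\sum_{k=1}^n\sum_{j=1}^{p_i}\x_{k,ij}\x_{k,ij}^H .
\end{equation}
Since a Kronecker product with $\I_{p_i}$ multiplies the rank by $p_i$, it remains to show that the $s_i\times s_i$ matrix $\sum_{k,j}\x_{k,ij}\x_{k,ij}^H$ has rank $\min[s_i,np_i]$ almost surely. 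This matrix is a sum of $np_i$ rank-one terms, so its rank equals the dimension of the span of the $np_i$ subvectors $\{\x_{k,ij}\}$; the value $\min[s_i,np_i]$ is precisely the largest rank attainable by $np_i$ vectors in $\mathbb{C}^{s_i}$ (they either span the space or are linearly independent).

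The one point genuinely requiring care -- the \emph{main obstacle} -- is to justify that this maximal rank is achieved with probability one. The bad event, that the $np_i$ subvectors fail to be in general position (some $\min[s_i,np_i]$-subset being linearly dependent), is contained in the vanishing locus of finitely many nontrivial polynomials in the subvector entries, hence a Lebesgue-null algebraic variety. It therefore suffices that the joint law of $\{\x_{k,ij}\}_{k,j}$ be absolutely continuous. This holds because each $\x_k$ is continuously distributed and the passage $\x_k\mapsto(\x_{k,i1},\dots,\x_{k,ip_i})$ is a linear coordinate extraction (a marginal in the basis $\Q$), which preserves absolute continuity; and because the $\x_k$ are independent, the joint law over $k=1,\dots,n$ is a product of absolutely continuous laws and is thus itself absolutely continuous. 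The null variety then has probability zero, establishing (\ref{cor_f_e}).

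Finally I would assemble the blocks: summing $\rank{\Pi_i}=p_i\min[s_i,np_i]$ over $i=1,\dots,m$ and using the block-diagonal structure of $\S^{\mathcal{G}}$ gives $\rank{\S^{\mathcal{G}}}=\sum_{i=1}^m p_i\min[s_i,np_i]$, which is (\ref{scm_g_d}). I expect everything except the absolute-continuity/general-position step to be a direct transcription of Lemma \ref{spanlem} with the count $p_i$ replaced by $np_i$, the only new ingredient being the use of independence of the samples to pass from one generic vector to a generic collection of $n$ of them.
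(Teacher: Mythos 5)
Your proof is correct, but it takes a genuinely different route from the paper's. You argue globally: identify the image of $\S^{\mathcal{G}}$ with $\sspan{\mathcal{G}\X}$, push the Reynolds block formula through the whole sample covariance to reduce everything to the rank of $\sum_{k,j}\x_{k,ij}\x_{k,ij}^H$, and justify the almost-sure maximal rank $\min[s_i,np_i]$ by an explicit absolute-continuity argument: rank deficiency is the vanishing locus of nontrivial minors, a proper algebraic variety, which is Lebesgue-null under the product of absolutely continuous marginals. The paper instead proceeds by induction on $n$: Lemma \ref{spanlem} gives the case $n=1$; for the inductive step it conditions on one sample $\x$, observes that $K=\sspan{\mathcal{G}\x}^{\perp}$ may be taken non-random and is a $\mathcal{G}$-invariant subspace with reduced multiplicities $s_i'=s_i-\min[s_i,p_i]$, notes that the projections of the remaining samples onto $K$ are independent and continuously distributed there, and applies the induction hypothesis, the count $p_i\min[s_i,np_i]$ emerging from peeling off $\min[s,p_i]$ from the multiplicity $n$ times. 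Your version buys explicitness: it upgrades the genericity assertion that Lemma \ref{spanlem} merely states (``with probability one for a continuously distributed $\x$'') to a fully justified claim about the joint general position of all $np_i$ subvectors, and it avoids the conditioning step and the bookkeeping of the isotypic structure of $K$. The paper's induction buys brevity: it reuses Lemma \ref{spanlem} as a black box and never argues joint genericity directly, at the cost of the implicit measure-theoretic steps (fixing $K$ by conditioning, continuity of the projected law) that you handle head-on. Both arguments are sound and yield the same counts in (\ref{cor_f_e}) and (\ref{scm_g_d}).
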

\begin{proof}
The proof can be found in Appendix \ref{cor1_app}.
\end{proof}
Recall that $p=\sum_{i=1}^m p_is_i$ to obtain that $\S^{\mathcal{G}}$ is a.s. full rank when 
\begin{equation}
n \geqslant \delta(\mathcal{G}) p.
\label{scm_b}
\end{equation}
As expected, the required number of samples is reduced by the degrees of freedom factor.

We conclude this brief introduction to group symmetry by listing a few examples of such structures which are ubiquitous in statistical signal processing:
\begin{itemize}
[leftmargin=*]
\item {\bf{Multiples of identity}}:
The simplest case of the group symmetry is the class of matrices of the form $\C = c\I$, where $c$ is a complex scalar. Obviously, such matrices commute with the whole $\mathcal{G} = U(p)$ (which is not a finite group) and already possess diagonal form, thus no basis change is required. Here $m=1, p_1=p, s_1=1, \rho(\mathcal{G})=\delta(\mathcal{G}) = 1/p$.

\item {\bf{Matrices with equal variances and covariances}}:
The next family of group symmetric covariances is obtained by taking $\mathcal{G}$ to be all the permutations on the coordinates of $p$-dimensional vectors, the $S_p$ group. In such a case the only matrices belonging to $\mathcal{S}(p)^\mathcal{G}$ are $\I$ and $\e\e^H$, where $\e = [1,\dots,1]^H$, and their linear combinations
\begin{equation}
\C =
 \begin{pmatrix}
  a & b & \dots & b \\
  b & a & \dots & b\\
  \vdots & \vdots & \ddots & \vdots \\
  b & b & \dots & a
  \end{pmatrix}.
\end{equation}
The orthonormal eigenbasis is constructed simply by taking one of the vectors to be $\frac{1}{\sqrt{p}}\e$ and completing it to an orthonormal basis arbitrarily. In this basis $\C$ reads as
\begin{equation}
\Q_l^H\C\Q_l =
 \begin{pmatrix}
  a+b(p-1) & 0  & \dots & 0 \\
  0 & a-b  & \dots & 0\\
  \vdots & \vdots  & \ddots & \vdots \\
  0 & 0 & \dots & a-b
  \end{pmatrix}.
\end{equation}
In this example $m=2, p_1=1, s_1=1, p_2=p-1, s_2=1, \rho(\mathcal{G}) = \delta(\mathcal{G}) = 1/p$. This is in fact a particular case of a more general covariance model. Let the group $\mathcal{G}_1$ consisting of $k\times k$ permutation matrices act on the first $k$ coordinates of a vector. Consider the infinite group of $p \times p$ unitary matrices $\mathcal{G} = \mathcal{G}_1 \times U(p-k)$, its fixed point set $\mathcal{P}(p)^\mathcal{G}$ consists of matrices of the form
\begin{equation}
\C =
 \begin{pmatrix}
  a & b & \dots & b & 0 & \dots & 0 \\
  b & a & \dots & b & 0 & \dots & 0 \\
  \vdots & \vdots & \ddots & \vdots & \vdots & \ddots & \vdots\\
  b & b & \dots & a & 0 & \dots & 0\\
  0 & 0 & \dots & 0 & c & \dots & 0 \\
  \vdots & \vdots & \ddots & \vdots & \vdots & \ddots & \vdots \\
  0 & 0 & \dots & 0 & 0 & \dots & c \\
  \end{pmatrix}.
\end{equation}
This example demonstrates a general idea, that simple models can play the role of building blocks for more involved ones, we just directly multiply the groups acting on the direct summands of the underlaying space to obtain them.

\item {\bf{Circulant}}:
\label{circ_def}
Raising the complexity, the next common class of group symmetric covariances is the set of Hermitian circulant matrices defined as
\begin{equation}
\C =
 \begin{pmatrix}
  c_1 & c_2 & c_3 & \dots & c_{p} \\
  c_{p} & c_1 & c_2 & \dots & c_{p-1}\\
  c_{p-1} & c_{p} & c_1 & \dots & c_{p-2}\\
  \vdots & \vdots & \vdots & \ddots & \vdots \\
  c_2 & c_3 & c_4 & \dots & c_1
  \end{pmatrix},
\label{circ_struct}
\end{equation}
with the natural Hermitian conditions $c_2 = \bar c_p$, etc. Such matrices are typically used as approximations to Toeplitz matrices which are associated with signals that obey periodic stochastic properties. These processes are frequently encountered in engineering and arise in a variety of contexts such as communication systems, radar detection, and the study of vibrations in mechanical engineering, \cite{wiesel2013time,dembo1989embedding, cai2013optimal}.
It is easy to see that the set of circulant matrices is invariant under conjugation by the shift matrix
\begin{equation}
\bm\Pi =
 \begin{pmatrix}
  0 & 1 & 0 & \dots & 0 \\
  0 & 0 & 1 & \dots & 0 \\
  \vdots & \vdots & \vdots & \ddots & \vdots \\
  0 & 0 & 0 & \dots & 1 \\
  1 & 0 & 0 & \dots & 0 \\
  \end{pmatrix},
\label{shift_mat}
\end{equation}
and all its powers $\bm\Pi^i,i=1,\dots,p$, forming a cyclic group of order $p$. It is well known that there exists an orthonormal basis diagonalizing the circulant matrices, which is given by the FFT matrix
\begin{equation}
\Q_c =
 \frac{1}{\sqrt{p}}\begin{pmatrix}
  1 & 1 &  \dots & 1 \\
  w_0 & w_1 & \dots & w_{p-1} \\
  \vdots & \vdots &  \ddots & \vdots \\
  w_0^{p-1} & w_1^{p-1} & \dots & w_{p-1}^{p-1} \\
  \end{pmatrix},
\label{circ_rot}
\end{equation}
where $w_i = e^{2\pi ji/p}$ are the complex roots of unity. In this case $m=p,\; p_i=1,\; s_i=1,\; i=1,\dots m,\; \rho(\mathcal{G}) = \delta(\mathcal{G}) = 1/p$.

In all the examples considered up to now there exists a basis, in which all the elements of $\mathcal{S}(p)^{\mathcal{G}}$ are diagonal, thus $\rho(\mathcal{G}) = \delta(\mathcal{G}) = 1/p$. Remarkably, this implies that a single measurement is enough to get a full rank SCM a.s.

\item {\bf{Block-circulant}}:
\label{circ_def}
A natural generalization of the class of circulant matrices is the class of block-circulant matrices, which is a set of Hermitian $p \times p$ matrices with the structure
\begin{equation}
\C =
 \begin{pmatrix}
  \C_1 & \C_2 & \C_3 & \dots & \C_{p/d} \\
  \C_{p/d} & \C_1 & \C_2 & \dots & \C_{p/d-1}\\
  \C_{p/d-1} & \C_{p/d} & \C_{1} & \dots & \C_{p/d-2}\\
  \vdots & \vdots & \vdots & \ddots & \vdots \\
  \C_2 & \C_3 & \C_4 & \dots & \C_1
  \end{pmatrix},
\end{equation}
where $\C_i$ are $d \times d$ square blocks. The set of block-circulant matrices is invariant under $\bm\Pi^d$-conjugation and its powers $\bm\Pi^{kd},\;k=1,\dots,p/d$, forming a cyclic group of order $p/d$. In this case, similarly, there exists an orthonormal basis bringing the block-circulant matrices to the block-diagonal form, which reads as \cite{wang2004eigenvalues}
\begin{equation}
\Q_{bc} = \Q_c \otimes \I_{d}.
\end{equation}
Here $m=p/d,\; p_i=1,\; s_i=d,\; i=1,\dots m,\; \rho(\mathcal{G}) = \delta(\mathcal{G}) = d/p$.

\item {\bf{PerHermitian}}:
\label{per_def}
Another popular class of group symmetric covariances is the set of Hermitian PerHermitian matrices, i.e., matrices which are in addition Hermitian with respect to the northeast-to-southwest diagonal. This condition can be concisely written as
\begin{equation}
\C \J = \J \C,
\end{equation}
where $\J$ is the exchange matrix
\begin{equation}
\J =
 \begin{pmatrix}
  0 & 0 & \dots & 1 \\
  \vdots & \vdots&  \udots & \vdots \\
  0 & 1 & \dots & 0 \\
  1 & 0 & \dots & 0 \\
  \end{pmatrix}.
\end{equation}
PerHermitian matrices and their real analog - persymmetric matrices are commonly encountered in radar systems using a symmetrically spaced linear array with constant pulse repetition interval, \cite{pailloux2011persymmetric}. This structure information can be exploited to improve detection performance, \cite{de2003maximum, pailloux2011persymmetric}. Note that the $\J$ generates a group of two elements $\{\I,\J\}$. Any Hermitian perHermitian matrix $\P$ can be unitarily transformed to a block-diagonal matrix with blocks of size $p/2$ for even $p$ by the following basis change matrix
\begin{equation}
\Q_p =
 \frac{1}{\sqrt{2}}\begin{pmatrix}
  \I & \I\\
  \J & -\J\\
  \end{pmatrix}.
\end{equation}
It is easy to show that $m=2, p_i=1, s_i=p/2, i=1,2,\; \rho(\mathcal{G}) = \delta(\mathcal{G}) = 1/2$.
%For an odd dimension $p$ there also exists an orthonormal basis, in which persymmetric matrices become block-diagonal with two blocks of sizes $(p-1)/2, (p+1)/2$.
\item {\bf{Proper Quaternion Covariance Matrix}}:
Many physical processes can be conveniently described in terms of quaternion signals. Quaternion numbers are a generalization of complex numbers to numbers with $4$ real elements, so that a length $p$ quaternion vector can be represented as $4p$ real or a $2p$ complex vector. Typical applications are complex electromagnetic signals with two polarizations \cite{sloin2013gaussian,miron2006quaternion}. Here we use the complex representation of quaternions. It is common to consider proper distributions, which are invariant to certain classes of quaternion rotations, \cite{sloin2013gaussian, ginzberg2011testing}. Among the different kinds of properness we choose the following: given a centered quaternion random vector
\begin{equation*}
\q = \x_1+k\x_2,\;\; \x_i \in \mathbb{C}^p,\;\; k\neq j,\; k^2=-1,
\end{equation*}
we say it is proper if $\x_1, \x_2$ are both independent complex proper with the same covariances and $\q$ is $k$-proper meaning that
\begin{equation}
\mathbb{E}[\q\q^H] = 0.
\end{equation}
This definition implies that the distribution of $\q$ is completely defined by the matrix
\begin{equation}
\C_H = \mathbb{E}[\q\q^{kH}],
\end{equation}
where
\begin{equation}
\q^{kH} = \x_1^H-k\x_2^H
\end{equation}
is the conjugation transpose with respect to $k$ and the subscript $H$ stands for Hamiltion, the discoverer of quaternions. The $2p\times 2p$ complex Hermitian form of $\C_H$ reads as
\begin{equation}
\C_C = \begin{pmatrix}
  \C_{11} & -\C_{12} \\
  \C_{12} & \C_{11}
  \end{pmatrix},
\end{equation}
where $\C_{11} = \mathbb{E}[\x_1\x_1^H],\; \C_{12} = \mathbb{E}[\x_1\x_2^H]$. Proper quaternion covariances and only them are invariant under the conjugation by the matrix
\begin{equation}
\Y =
\begin{pmatrix}
  0 & -1 \\
  1 & 0
  \end{pmatrix} \otimes \I_p,
\label{prop_q_g}
\end{equation}
forming a cyclic group of second order. The corresponding orthonormal basis turning $\C_H$ into the block-diagonal form with two $p\times p$ blocks reads as
\begin{equation}
\Q_H = \frac{1}{\sqrt{2}}\begin{pmatrix}
  1 & -1 \\
  j & j
  \end{pmatrix} \otimes \I_p.
\end{equation}
Here $m=2,\; p_i=1,\; s_i=2p/2=p,\; i=1,2,\;\rho(\mathcal{G}) = \delta(\mathcal{G}) = 1/2$.
\end{itemize}
As we have already mentioned and shown by examples, more involved symmetry groups can be constructed by using the simple ones as building blocks and superposing them via direct product.

\section{The STyler Estimator}
\label{st_sec}
In this section we introduce the STyler - a group symmetric version of Tyler's robust covariance estimator. Following the original derivation in \cite{tyler1987distribution}, we begin with an implicit definition, and then discuss its existence, uniqueness and convergence properties. Throughout this section, we do not assume any specific probabilistic model. 

Assume the data consists of $n$ complex vectors
\begin{equation}
\X = \{\x_1,\dots,\x_n \mid \x_i \in \mathbb{C}^p,\; i=1,\dots,n\}.
\end{equation}
In the Gaussian case, a natural covariance estimator using this data is the sample covariance in (\ref{SCM}). Its symmetric version is obtained by applying this formula to the synthetically rotated and replicated data ${\mathcal{G}}\X$. This leads to the Reynold's projection in (\ref{SCMRey}). Similarly, we define the STyler by applying Tyler's definition in (\ref{tylerequ}) to ${\mathcal{G}}\X$:
\begin{definition}
\label{def_s}
Any matrix satisfying
\begin{equation}
\widehat{\bm\Theta}^\mathcal{G} = \frac{p}{n|\mathcal{G}|}\sum_{i=1}^n \sum_{\K \in \mathcal{G}} \frac{\K\x_i\x_i^H\K^H}{\x_i^H\K^H\[\widehat{\bm\Theta}^\mathcal{G}\]^{-1}\K\x_i}
\label{def_ste_fp}
\end{equation}
is referred to as the STyler estimator.
\end{definition}

The following theorem characterizes the appealing properties of the STyler estimator.
\begin{theorem}
\label{eq_lem}
When $\X$ is sampled from a continuous distribution with independent samples and $n>\delta(\mathcal{G})p$, the STyler exists, belongs to $\mathcal{P}(p)^\mathcal{G}$ and is unique a.s., up to a scaling factor. It can be computed via the normalized fixed point iteration
\begin{equation}
\begin{cases}
\bm\Psi_{j+1}  = \sum_{i=1}^n \sum_{\K \in \mathcal{G}} \frac{\K\x_i\x_i^H\K^H}{\x_i^H\K^H\[\widehat{\bm\Theta}_j^\mathcal{G}\]^{-1}\K\x_i},\\
\widehat{\bm\Theta}_{j+1}^\mathcal{G} = \frac{\bm\Psi_{j+1}}{\Tr{\bm\Psi_{j+1}}}.
\label{def_ste_fpi}
\end{cases}
\end{equation}
which converges starting from any point in $\mathcal{P}(p)$.
\end{theorem}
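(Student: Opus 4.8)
The plan is to recognize the STyler equation (\ref{def_ste_fp}) as Tyler's fixed-point equation (\ref{tylerequ}) applied to the augmented orbit data $\mathcal{G}\X=\{\K\x_i:\K\in\mathcal{G},\,i=1,\dots,n\}$, a set of $N=n|\mathcal{G}|$ vectors, and then to analyze it through the $g$-convex optimization machinery. Concretely, I would introduce the angular Gaussian negative log-likelihood
\begin{equation*}
L^{\mathcal{G}}(\bm\Theta)=\frac{p}{n|\mathcal{G}|}\sum_{i=1}^{n}\sum_{\K\in\mathcal{G}}\log{\(\x_i^H\K^H\bm\Theta^{-1}\K\x_i\)}+\log{\det{\bm\Theta}},
\end{equation*}
whose critical-point condition on $\mathcal{P}(p)$ is exactly (\ref{def_ste_fp}). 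This functional is $g$-convex on $\mathcal{P}(p)$ (it is Tyler's cost for $\mathcal{G}\X$), and it is invariant under $\bm\Theta\mapsto\K\bm\Theta\K^H$ for every $\K\in\mathcal{G}$, since such a substitution merely permutes the inner sum over the group and leaves $\det{\bm\Theta}$ unchanged. Because Theorem \ref{g_th} guarantees that $\mathcal{P}(p)^{\mathcal{G}}$ is itself $g$-convex, I would phrase the whole problem as minimizing the $g$-convex $L^{\mathcal{G}}$ over the $g$-convex set $\mathcal{P}(p)^{\mathcal{G}}$, modulo the scaling invariance $\bm\Theta\mapsto c\bm\Theta$.

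Existence and uniqueness would then follow from two standard ingredients of $g$-convex programming: coercivity of $L^{\mathcal{G}}$ on $\mathcal{P}(p)^{\mathcal{G}}$ after fixing the scale (say $\Tr{\bm\Theta}=1$), which yields a minimizer, and strict $g$-convexity modulo scale, which makes it unique. The first ingredient is the crux. As in the classical Kent--Tyler analysis, $L^{\mathcal{G}}$ fails to be coercive precisely when mass escapes toward a proper subspace $V$ carrying too many data points, i.e. when $\#\{\y\in\mathcal{G}\X:\y\in V\}\geq\tfrac{N}{p}\dim V$. Restricting to $\mathcal{P}(p)^{\mathcal{G}}$, the eigenspaces along which a $\mathcal{G}$-invariant $\bm\Theta$ can degenerate are themselves $\mathcal{G}$-invariant, so it suffices to rule this out for $\mathcal{G}$-invariant $V$. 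For such $V$ one has $\K\x_i\in V\Leftrightarrow\x_i\in V$, and the escape condition collapses to the clean requirement that for every proper $\mathcal{G}$-invariant subspace $V$,
\begin{equation*}
\frac{\#\{i:\x_i\in V\}}{n}<\frac{\dim V}{p}.
\end{equation*}

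The hard part is verifying this last inequality almost surely whenever $n>\delta(\mathcal{G})p$, because the presence of repeated blocks ($s_i>1$) creates a continuum of invariant subspaces, so a per-$V$ measure-zero argument does not suffice. Here I would use the isotypic decomposition of Theorem \ref{rep_basis}: a proper $\mathcal{G}$-invariant $V$ is determined by choosing subspaces $W_i$ of the multiplicity spaces, of dimension $t_i\leq s_i$ with not all $t_i=s_i$, so that $\dim V=\sum_i p_i t_i$; and $\x_k\in V$ holds iff the generic block ``row spaces'' $R^{(i)}_k$ of dimension $\min[p_i,s_i]$ -- the same objects underlying Lemma \ref{spanlem} -- satisfy $R^{(i)}_k\subseteq W_i$ for all $i$. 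Since there are only finitely many subsets $S\subseteq\{1,\dots,n\}$, a single a.s. event secures that $\dim\sum_{k\in S}R^{(i)}_k=\min[\,|S|\min(p_i,s_i),\,s_i\,]$ for every $S$ and $i$, so the smallest invariant subspace enclosing a given $S$ has dimension $\sum_i p_i\min[\,|S|\min(p_i,s_i),s_i\,]$. The inequality then reduces to a purely arithmetic statement: for every integer $s$ with $1\leq s<\delta(\mathcal{G})p$,
\begin{equation*}
\frac{s}{n}<\frac{1}{p}\sum_{i=1}^{m}p_i\min[\,s\min(p_i,s_i),\,s_i\,],
\end{equation*}
which I expect to prove termwise: since $\delta(\mathcal{G})p=\max_i s_i/p_i$, each summand obeys $\min[\,s\min(p_i,s_i),s_i\,]\geq \tfrac{s}{\delta(\mathcal{G})p}\,s_i$, and summing against $p_i$ together with $n>\delta(\mathcal{G})p$ gives the strict bound. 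Carrying out this reduction uniformly over invariant subspaces -- rather than over the continuum directly -- is where the main care is needed.

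Finally, I would recover the remaining claims. Invariance is immediate once uniqueness is known: if $\widehat{\bm\Theta}^{\mathcal{G}}$ solves (\ref{def_ste_fp}) then so does $\K\widehat{\bm\Theta}^{\mathcal{G}}\K^H$, because the right-hand side of (\ref{def_ste_fp}), viewed as a map $F(\bm\Theta)$, satisfies $F(\K\bm\Theta\K^H)=\K F(\bm\Theta)\K^H$ for $\K\in\mathcal{G}$; hence $\K\widehat{\bm\Theta}^{\mathcal{G}}\K^H=c_{\K}\widehat{\bm\Theta}^{\mathcal{G}}$, and taking determinants forces $c_{\K}=1$, so $\widehat{\bm\Theta}^{\mathcal{G}}\in\mathcal{P}(p)^{\mathcal{G}}$. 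Equivalently, a minimizer may be produced inside $\mathcal{P}(p)^{\mathcal{G}}$ by replacing any minimizer with the Riemannian barycenter of its $\mathcal{G}$-orbit, which is fixed by $\mathcal{G}$ and is again a minimizer by $g$-convexity. For convergence, I would note that the normalized iteration (\ref{def_ste_fpi}) is exactly Tyler's fixed-point/MM iteration for the data $\mathcal{G}\X$; it monotonically decreases $L^{\mathcal{G}}$, is equivariant (so symmetric iterates stay symmetric), and, the minimizer being unique modulo scale, converges to the normalized STyler from any starting point in $\mathcal{P}(p)$ by the standard convergence results for Tyler's iteration \cite{tyler1987distribution,pascal2008covariance,wiesel2012geodesic,zhang2012robust}.
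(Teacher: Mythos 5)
Your proposal is correct and follows essentially the same route as the paper's proof: view the STyler as Tyler's estimator applied to the orbit data $\mathcal{G}\X$, minimize the $g$-convex cost over the $g$-convex set $\mathcal{P}(p)^{\mathcal{G}}$ (Theorem \ref{g_th}), reduce coercivity on that set to the counting condition (\ref{gsc}) over proper $\mathcal{G}$-invariant subspaces (Lemma \ref{geom_lem}), verify that condition a.s. for $n>\delta(\mathcal{G})p$ through the isotypic decomposition and generic-rank counting (Corollary \ref{lemma:isotypicBound}, Lemmas \ref{lemma:intersectBound} and \ref{lemma:asInequality}), and obtain uniqueness and convergence from the results of \cite{zhang2012robust} applied to the a.s. spanning set $\mathcal{G}\X$. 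The differences are presentational rather than substantive: your minimal-enclosing-invariant-subspace arithmetic is an equivalent repackaging of Lemmas \ref{lemma:intersectBound} and \ref{lemma:asInequality}, and your appeal to the ``classical Kent--Tyler analysis'' for coercivity on the invariant cone is exactly the step the paper carries out in detail in Appendix \ref{lem2_app} via flags of $\mathcal{G}$-invariant subspaces (Lemmas \ref{lemma:SAdditiv} and \ref{lemma:PartitCoeff}), which are needed to handle eigenvalues decaying at different rates and the randomness of the degenerating subspaces.
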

\begin{proof}
Most of the properties are directly inherited from the original Tyler's estimator, with the advantage of requiring less samples due to their synthetic replication. In particular, we follow the derivations in \cite{zhang2012robust}. We begin by noting that the STyler can be interpreted as the minimizer of 
the $g$-convex function
\begin{equation}
F(\bm\Theta) = \frac{p}{n}\sum_{i=1}^n\log(\x_i^H\bm\Theta^{-1}\x_i)+\log|\bm\Theta|,
\label{target_gen}
\end{equation}
over the $g$-convex group symmetric set $\mathcal{P}(p)^\mathcal{G}$ (see Theorem \ref{g_th}). Indeed, when $\bm\Theta\in \mathcal{P}(p)^\mathcal{G}$ we have
\begin{equation}
F^{\mathcal G}(\bm\Theta) = \frac{p}{n|\mathcal{G}|}\sum_{i=1}^n\sum_{\K \in \mathcal{G}}\log(\x_i^H\K^H\bm\Theta^{-1}\K\x_i)+\log|\bm\Theta|.
\label{target_restr}
\end{equation}
Ignoring the constraints, the minimizer of this objective is Tyler's estimator with the replicated samples $\mathcal{G}\X$. It is easy to check that it belongs to $\mathcal{P}(p)^\mathcal{G}$ and is therefore also the minimizer subject to the constraints. 

{\em{Uniqueness:}} The uniqueness (up to scaling) follows from Lemma III.2 from \cite{zhang2012robust} applied to the target (\ref{target_restr}). Note that according to Corollary \ref{lemma:isotypicBound}, when $n>\delta(\mathcal{G})p$, the set $\mathcal G\X$ almost surely spans the space.

{\em{Existence:}} The solution to (\ref{target_restr}) is scale invariant, thus we fix the scale by restricting our attention to the set
\begin{equation}
\mathcal{M}^{\mathcal{G}} = \{\M\mid\M\in\mathcal{P}(p)^{\mathcal{G}}, \Tr{\M}=1\}.
\end{equation}
Note that in our case the set $\mathcal{G}\X$ may be dependent, both statistically and linearly, therefore, to prove that a minimizer exists we extend Lemma III.3 from \cite{zhang2012robust} to the group symmetric case in the following
\begin{lemma}
\label{geom_lem}
If for any random proper $\mathcal{G}$-invariant subspace $L \subset \mathbb{C}^p$
\begin{equation}
\frac{|\X \cap L|}{n} < \frac{\dim{L}}{p},
\label{gsc}
\end{equation}
then
\begin{equation}
F(\bm\Theta) \to + \infty, \text{ when } \mathcal{M}^\mathcal{G} \ni \bm\Theta \to \partial\mathcal{M}^\mathcal{G}.
\label{f_inf}
\end{equation}
\end{lemma}
\begin{proof}
The proof can be found in Appendix \ref{lem2_app}.
\end{proof}
Note that Lemma \ref{geom_lem} holds true for any random proper $\mathcal{G}$-invariant subspace $L$, and not only for an arbitrary fixed subspace. This technical detail is unavoidable due to the fact that we must allow $L$ to statistically depend on $\X$, which is a random set. As one can easily observe, if $L$ is a fixed nonzero subspace, the condition (\ref{f_inf}) verifies vacuously, since $\X \cap L$ is a.s. an empty set.

\begin{lemma}\label{lemma:asInequality}
If $\X \subset \mathbb{C}^p$ contains $n$ independent samples and
\begin{equation}
n > \max_{1\leqslant i \leqslant m}\(\frac{s_i}{p_i}\) = \delta(\mathcal G)p,
\end{equation}
then for any random proper $\mathcal G$-invariant subspace $L \subset \mathbb C^p$,
(\ref{gsc}) holds true.
\end{lemma}
\begin{proof}
The proof can be found in Appendix \ref{lem3_app}.
\end{proof}

Now Theorem \ref{g_th} together with Lemma \ref{geom_lem} implies the existence of the minimum of (\ref{target_gen}) restricted to $\mathcal M^{\mathcal G}$, which is also the minimum of (\ref{target_restr}).

{\em{Convergence:}} 
When we compute the STyler iteratively, using the scheme (\ref{def_ste_fpi}), we normalize the current approximation $\widehat{\bm\Theta}_i^\mathcal{G}$ by its trace on each step (see \cite{tyler1987distribution, zhang2012robust} for details). When $n>\delta(\mathcal{G})p,\; \mathcal{G}\X$ spans $\mathbb{C}^p$ a.s., hence the convergence of the iterative scheme from any starting point follows from Theorem III.4 from \cite{zhang2012robust}. 
\end{proof}

A few remarks are in order here. Note that unlike Theorem III.1 from \cite{zhang2012robust}, our Theorem \ref{eq_lem} does not hold for any data $\X$ in general position, but rather with probability one under any continuous distribution. This probabilistic condition cannot be discarded due to Lemma \ref{spanlem}, which may otherwise be wrong. As an example let us consider the circulant symmetry setting and take a vector $\x=[1,\dots,1]^T$, which is an eigenvector of all the shift matrices (\ref{shift_mat}). If such a vector belongs to the data set $\X$, its copies will all coincide with itself and will not contribute new information, which is formally reflected by the fact that Lemma \ref{spanlem} will fail. On the contrary, when the data is sampled randomly, Theorem \ref{eq_lem} suggests a very surprising and promising result on the number of required samples, compared to the classical Tyler's setting. In the same circulant case, according to Section \ref{circ_def}, only two samples are enough to guarantee the existence and uniqueness of the STyler no matter what the ambient dimension $p$ is. Similarly, significant benefits in the number of demanded measurements can be achieved under the other group symmetry constraints.

\section{STyler Performance Analysis in Elliptical Populations}
\label{schur_sec}
Having established the existence and uniqueness conditions for the STyler, in this section we proceed to a different important criterion of its power, namely, its performance properties. For this purpose we need to consider a specific class of populations. As we have already mentioned above, Tyler's estimator is closely related to a certain family of spherical distributions, \cite{greco2013cramer, soloveychik2014non}, and is actually an MLE of their shape matrix parameter.
In this section we briefly introduce this family of distributions, explain their relation to elliptical populations and claim high probability error guarantees on the STyler estimator.
\begin{definition}
\label{def}
Assume $\bm\Theta_0 \in \mathcal{P}(p)$, then the function
\begin{equation}
p(\x) = \frac{(p-1)!}{\pi^p} \frac{1}{|\bm\Theta_0|(\x^H  \bm\Theta_0^{-1}\x)^p}
\label{tyler_distr}
\end{equation}
is a probability density function of a vector $\x \in \mathbb{C}^p$ lying on a unit sphere. This distribution is usually referred to as the Complex Angular Elliptical (CAE) distribution, \cite{greco2013cramer}, and we denote it as $\x \sim \mathcal{U}(\bm\Theta_0)$. The matrix $\bm\Theta_0$ is referred to as a shape matrix of the distribution and is a multiple of the covariance matrix of $\x$.
\end{definition}

The CAE distribution is a straight forward extension of its real prototype, the real angular central Gaussian distribution, \cite{tyler1987statistical}. CAE is closely related to the class of complex GE distributions, which includes Gaussian, compound Gaussian, elliptical, skew-elliptical, CAE and other distributions, \cite{frahm2007tyler}. An important property of the GE family is that the shape matrix of a population does not change when the vector is divided by its Euclidean norm \cite{frahm2004generalized, frahm2007tyler}. After normalization, any GE vector becomes CAE distributed. This allows us to treat all these distributions together using Tyler's estimator, which is the MLE of the shape matrix parameter in CAE populations and is unbiased, \cite{greco2013cramer, soloveychik2014non}. Being an MLE, Tyler's estimator is known to be asymptotically statistically efficient for CAE populations, \cite{tyler1987distribution,pascal2008covariance}, and to reach the Cramer-Rao lower Bound (CRB), \cite{greco2013cramer}.

Since we focus on the group symmetric scenario, we would like to derive an MLE of the CAE shape matrix under the prior $\bm\Theta_0 \in \mathcal{P}(p)^\mathcal{G}$. As we have already mentioned in the proof of Theorem \ref{eq_lem}, the target (\ref{target_gen}), which is a negative log-likelihood of the CAE population (\ref{tyler_distr}), is a $g$-convex function. Together with the $g$-convexity of the group symmetric constraints, Theorem \ref{g_th}, this ensures that the constrained MLE can be efficiently found. Theorem \ref{eq_lem} also suggests that this group symmetric MLE is given by the STyler estimator (\ref{def_ste_fp}). As a corollary, this implies that when the true covariance is group symmetric, the STyler is consistent and asymptotically statistically efficient.

Next we claim a high probability error bound for the STyler by using the Classification Theorem. Indeed, we know that when the true covariance matrix $\bm\Theta_0$ is $\mathcal{G}$-invariant, there exists an orthonormal basis, in which $\bm\Theta_0$ posses sparse block-diagonal form. Therefore, the actual number of parameters to be estimated (the dimension of the subspace $\mathcal{S}(p)^\mathcal{G}$) is much smaller that the dimension of the ambient space $\mathcal{S}(p)$. This reduction in the number of active degrees of freedom allows for significant improvement of the error bounds, demonstrated below.

Let $\bm\Omega_0=\bm\Theta_0^{-1}, \;\underline \lambda=\lambda_{\min}(\bm\Theta_0) = \lambda_{\max}^{-1}(\bm\Omega_0) > 0$ and set
\begin{equation}
\cos{\phi_0} = \frac{\Tr{\bm\Omega_0}}{\sqrt{p}\norm{\bm\Omega_0}_F} > 0.
\label{ang_def}
\end{equation}

\begin{theorem}
\label{thm:main_res}
Given $n > \delta(\mathcal{G})p$ i.i.d. copies of $\x \sim \mathcal{U}(\bm\Theta_0)$, for $\theta\geqslant 0$ with probability at least
\begin{align}
&1-2\exp\(\frac{-\theta^2}{2(1+1.7\frac{\theta}{\sqrt{\rho(\mathcal{G})n}})}\) \\
&-2p^2\exp\(-\frac{n\cos^2{\phi_0}}{80\ln(7p)(1+\frac{1}{p})}\)\(1+\frac{8\cdot10^3(1+\frac{1}{p})^4}{n\cos^8{\phi_0}}\), \nonumber
\end{align}
the STyler estimator scaled by the condition $\Tr{[\widehat{\bm\Theta}^\mathcal{G}]^{-1}} = \Tr{\bm\Theta_0^{-1}}$ satisfies
\begin{equation}
\norm{[\widehat{\bm\Theta}^\mathcal{G}]^{-1} - \bm\Theta_0^{-1}}_F \leqslant \sqrt{\rho(\mathcal{G})}\frac{10\theta}{\underline\lambda \cos^2{\phi_0}}\frac{p+1}{\sqrt{n}}.
\label{fin_bound}
\end{equation}
\end{theorem}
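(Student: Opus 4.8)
The plan is to adapt the non-asymptotic analysis of the unconstrained Tyler estimator in \cite{soloveychik2014non} to the group-symmetric setting, tracking how the orthogonal projection onto $\mathcal S(p)^{\mathcal G}$ collapses the effective dimension from the ambient $p^2$ entries to the $\rho(\mathcal G)p^2$ nonzero entries of a $\mathcal G$-invariant matrix. First I would move to convenient coordinates. Because the STyler and $\bm\Theta_0$ both lie in $\mathcal P(p)^{\mathcal G}$ and the whole problem is scale- and unitarily-invariant, I would pass to the basis $\Q$ of the Classification Theorem (Theorem \ref{rep_basis}) and whiten by $\bm\Theta_0^{1/2}$; since $\bm\Theta_0$ commutes with $\mathcal G$ so does $\bm\Theta_0^{1/2}$, hence whitening preserves the symmetry and carries $\x\sim\mathcal U(\bm\Theta_0)$ into a vector $\s$ uniform on the unit sphere. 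As already observed in the proof of Theorem \ref{eq_lem}, the $\mathcal G$-invariance of $\widehat{\bm\Omega}^{\mathcal G}=[\widehat{\bm\Theta}^{\mathcal G}]^{-1}$ lets me absorb the conjugations into the data, so the STyler coincides with Tyler's estimator run on the orbit $\mathcal G\X$, each element $\K\x_i$ being again $\mathcal U(\bm\Theta_0)$-distributed. The CAE moment identity $\EE{p\,\x\x^H/(\x^H\bm\Omega_0\x)}=\bm\Theta_0$ then identifies $\bm\Omega_0$ as the population fixed point (in whitened coordinates, $\EE{p\,\s\s^H}=\I$).

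Second, I would linearize the fixed-point equation (\ref{def_ste_fp}) about $\bm\Omega_0$. Writing $\Del=\widehat{\bm\Omega}^{\mathcal G}-\bm\Omega_0$ and expanding the denominators, the equation takes the form $\mathcal A_{\mathcal G}(\Del)=\Pi_{\mathcal G}(\R)+\text{(remainder)}$, where $\R$ is the empirical residual evaluated at the true value and $\mathcal A_{\mathcal G}$ is the linearized fixed-point operator restricted to $\mathcal S(p)^{\mathcal G}$. The analytic input I would import from \cite{soloveychik2014non} is that on a regularity event — on which the empirical quadratic forms $\s_i^H\widehat{\bm\Omega}^{\mathcal G}\s_i$ stay uniformly comparable to their true values $\s_i^H\bm\Omega_0\s_i$ — the operator $\mathcal A_{\mathcal G}$ is invertible with $\norm{\mathcal A_{\mathcal G}^{-1}}$ controlled by $(\cos^2{\phi_0})^{-1}$ and the remainder is of lower order. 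Establishing this event uniformly over directions by a covering and union-bound argument over the $p$-sphere is precisely what produces the second probability term, with its $2p^2$ prefactor, its $\ln(7p)$, and its $\cos\phi_0$ dependence. Crucially, this step cannot exploit the symmetry: the linearization has to be controlled in every direction, so the ambient $p^2$ survives here.

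Third, I would concentrate the projected residual $\R=\Pi_{\mathcal G}\big(\tfrac{p}{n}\sum_{i=1}^n\s_i\s_i^H-\I\big)=\tfrac1n\sum_{i=1}^n\Z_i$, a sum of i.i.d. zero-mean Hermitian matrices $\Z_i=p\,\Pi_{\mathcal G}(\s_i\s_i^H)-\I$ whose support is the $\rho(\mathcal G)p^2$ nonzero entries of $\mathcal S(p)^{\mathcal G}$. Independence holds across the samples $i$; the statistical dependence among the orbit replicas $\{\K\s_i\}_{\K\in\mathcal G}$ is harmless, since each $\Z_i$ already carries the full group average. A Hilbert-space-valued Bernstein inequality applied to $\norm{\R}_F$, with the per-summand magnitude bounded by $p+1$ and the variance proxy controlled by the size $\rho(\mathcal G)p^2$ of the invariant support, then gives $\norm{\R}_F\leqslant C\sqrt{\rho(\mathcal G)}\,\theta\,(p+1)/\sqrt n$ on the first probability event, whose Bernstein exponent carries the reduced effective sample size $\rho(\mathcal G)n$. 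Combining $\norm{\Del}_F\leqslant\norm{\mathcal A_{\mathcal G}^{-1}}\,\norm{\Pi_{\mathcal G}(\R)}_F$ with the reverse-whitening bound (which contributes the $\underline\lambda^{-1}$ factor) and the scale fixing $\Tr{[\widehat{\bm\Theta}^{\mathcal G}]^{-1}}=\Tr{\bm\Theta_0^{-1}}$, and taking a union bound over the two events, yields (\ref{fin_bound}).

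The step I expect to be hardest is the concentration together with its bookkeeping: routing the $\mathcal G$-structure honestly through the variance so that the gain is the sparsity factor $\sqrt{\rho(\mathcal G)}$ — and no spurious dimension factor creeps in — while keeping the per-summand magnitude $p+1$ and the linearization remainder under simultaneous control. The delicate point is that the $\Z_i$ are dependent within an orbit, so Bernstein must be applied to the $n$ group-averaged summands and the variance proxy computed for the projected rank-one objects $\Pi_{\mathcal G}(\s_i\s_i^H)$; reconciling this reduced-dimension concentration with the direction-uniform regularity event of \cite{soloveychik2014non}, which stubbornly keeps the $p^2$ prefactor, is what forces the final probability to split into the two displayed terms.
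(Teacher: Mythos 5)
Your proposal is correct and follows essentially the same route as the paper: the paper likewise imports the analysis of \cite{soloveychik2014non} wholesale --- including the direction-uniform regularity argument that produces the second probability term with its $2p^2$ and $\ln(7p)$ factors --- and makes a single substantive change, namely the residual concentration step, where the Classification Theorem basis plus a moment bound on the block-masked residual (Lemma \ref{u_mom}) are fed into a Bernstein inequality (Lemma \ref{main_lem}, Corollary \ref{dim_r_cor}) to yield the $\sqrt{\rho(\mathcal{G})}$ gain. The only difference is presentational: the paper states the concentration dually, bounding $|\Tr{\U\wideparen{\Z}}|$ for a fixed $\U\in\mathcal{S}(p)^{\mathcal{G}}$ via Cauchy--Schwarz against the masked residual, which is equivalent to your direct bound on $\norm{\Pi_{\mathcal{G}}(\R)}_F$ since the Reynolds projection is a Frobenius-norm contraction of the masking.
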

\begin{proof}
The proof is based on the proof of Tyler's estimator error bound in \cite{soloveychik2014non}. The minor technical changes are due to transition to the complex case, which is straight forward, and only affects constants. The only significant core change is due to application of Theorem \ref{rep_basis}, which consists in using Corollary \ref{dim_r_cor} from Appendix \ref{perf_app} instead of Lemma 2 in \cite{soloveychik2014non}. This allows us to obtain a $\sqrt{\rho(\mathcal{G})}$ factor improvement in the bound (\ref{fin_bound}).
\end{proof}

This theorem basically claims that unlike the original Tyler's estimator, whose inverse's high-probability Frobenius norm error depends on $p$ and $n$ as $\frac{p}{\sqrt{n}}$, \cite{soloveychik2014non}, STyler's error is reduced by a factor of $\sqrt{\rho(\mathcal{G})}$. Moreover, this bound is guaranteed to be reliable already for $n > \delta(\mathcal{G})p$ samples, and does not require $p+1$ measurements as in Tyler's case. All this shows that in the STyler, the same level of accuracy can be achieved with a significantly reduces number of samples.

\section{Numerical Simulations}
\begin{figure}[!t]
\includegraphics[width=3.6in]{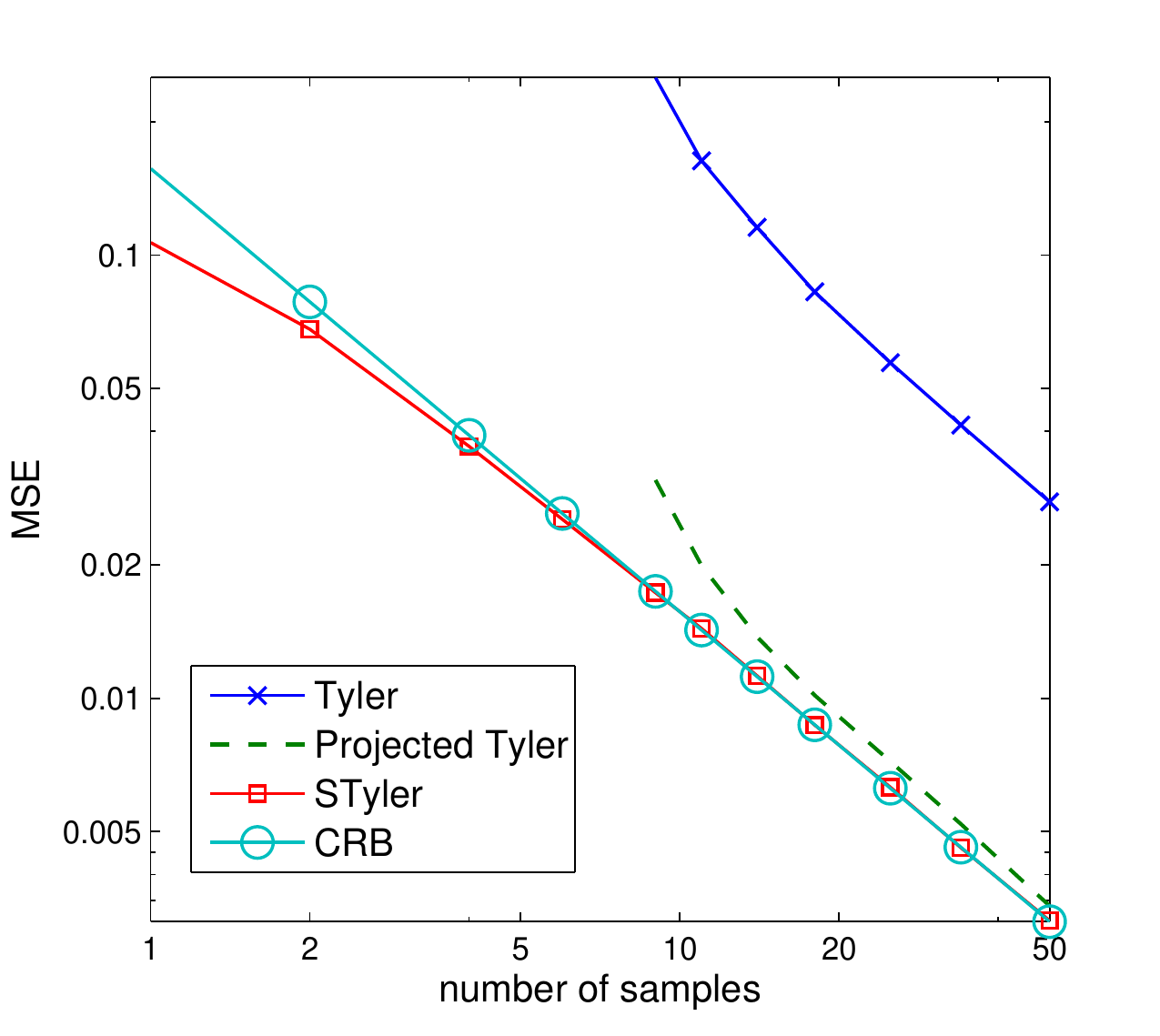}
\caption{STyler's performance in circulant case, $p=8$.}
\label{pic_c}
\includegraphics[width=3.6in]{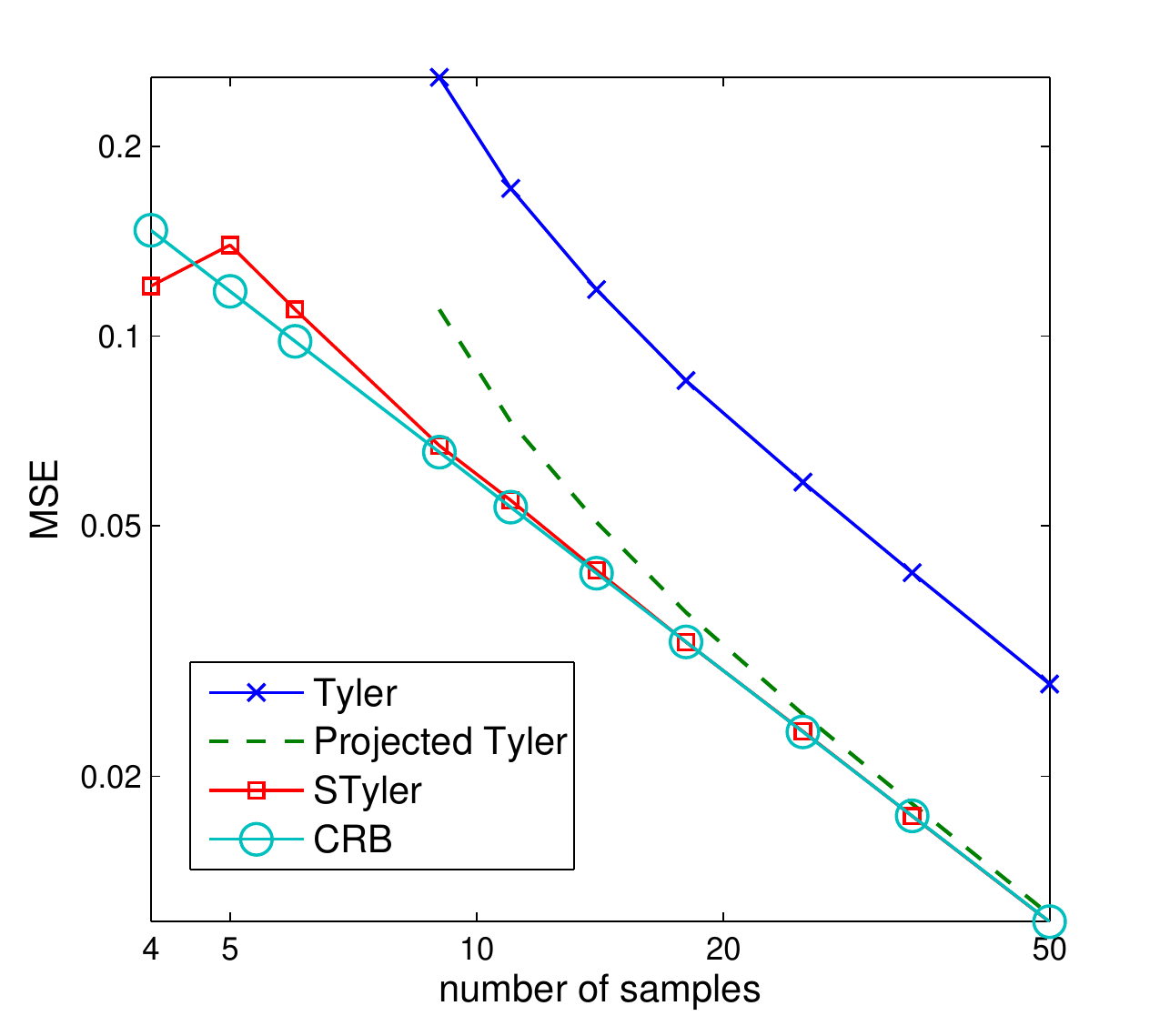}
\caption{STyler's performance in proper quaternion case, $p=8$.}
\label{pic_p}
\end{figure}
\label{appl}
In this section we present numerical simulations demonstrating the performance advantages of the STyler compared to Tyler's estimator and Tyler's projection onto the known structure set. The plots show the Mean Squared Errors (MSE) of the estimators defined as
\begin{equation}
\text{MSE}(\widehat{\bm\Theta}) = \mathbb{E}[\widehat{\bm\Theta} - \bm\Theta_0]^2,
\end{equation}
and the corresponding Cramer-Rao Bounds (CRBs) as functions of the number of samples. The CRBs bound from below variances of any unbiased estimators and in our case are obtained using the formulae from section III of \cite{soloveychik2014tyler}, by plugging the bases of the subspaces $\mathcal{S}(p)^{\mathcal{G}}$. Figure \ref{pic_c} shows both the performance advantages of the STyler in a $8$ dimensional circulant population, and the fact that $\delta(\mathcal{G})+1 = 2$ samples are enough for the STyler to exist and be unique. The true covariance matrix was a random circulant matrix with positive spectrum. Recall that circulant matrices are $\mathcal{G}$-invariant with $\mathcal{G}$ being the cyclic group of order $p$ generated by the shift matrix $\bm\Pi$ form (\ref{shift_mat}) of proper dimension.
%According to the theoretical promise given by Theorem \ref{thm:main_res}, we would expect the ratio of the MSE curves to be close to $\sqrt{\rho(\mathcal{G})}$, which in our case should be equal $\sqrt{\frac{1}{16}}=\frac{1}{4}$. Indeed, it can be easily noticed from the picture that the distance between the corresponding MSE graphs is very close to $\ln{4}$.
Similarly, Figure \ref{pic_p} addresses the $8$ dimensional proper quaternion setting with the true covariance being a randomly generated proper positive definite matrix. In this case the corresponding group symmetry is induced by a cyclic group of order two, generated by a $8 \times 8$ matrix $\Y$ from (\ref{prop_q_g}). In this case $\delta(\mathcal{G})+1 = 5$ samples are sufficient to guarantee the existence and uniqueness of the constrained estimator.
%Since here $\rho(\mathcal{G}) = \frac{1}{2}$, we would expect the performance gain to be close to $\sqrt{\frac{1}{2}}$, which is actually confirmed by the simulation, suggesting that the vertical distance between the MSE curves is close to $\ln{2}$.
Both graphs clearly confirm the performance benefits of the STyler estimator and the predicted by Theorem \ref{eq_lem} reduction in the demanded number of samples. In addition, the figures demonstrate that as the number of samples $n$ grows large, the log-scale performance gap becomes constant, since both Tyler's estimator and the STyler approach the corresponding CRB lines. We also note that the computational complexity of a single STyler's iteration is only at most $|\mathcal{G}|$ times larger than that of Tyler's estimator due to the increased number of summands in (\ref{def_ste_fp}) compared to (\ref{tylerequ}).

\section{Conclusion}
In the recent years robust covariance matrix estimation has become the cornerstone of many engineering applications. One of the most powerful and popular approaches to this task is to use the so called M-estimators, and in particular, Tyler's fixed point estimator. When the number of samples is not large enough to guarantee good estimation precision, prior knowledge in form of structural constraints is usually introduced. In this paper we focus on the group symmetric matrix constraints. We develop a novel group symmetric analog of Tyler's covariance estimator (the STyler) and show that its existence and uniqueness are guaranteed under much weaker requirements on the number of sample measurements. Surprisingly, this STyler estimator is given by a fixed point equation, analogous to the one corresponding to the original Tyler's estimator. In addition, we derive high probability error bounds of the STyler, which improve upon the Tyler's estimator's ones quite significantly. The results are supported by numerical simulations.

\appendices
\section{}
\label{g_app}
We briefly mention a few notions from the theory of smooth Riemannian manifolds. See \cite{rapcsak1991geodesic, wiesel2012geodesic} and references therein for a more detailed exposition.
\begin{definition}
With each pair of matrices $\M_0,\M_1\in
\mathcal{P}(p)$ we associate the geodesic curve
\begin{eqnarray}\label{geodesics}
 \M_t=\M_0^{\frac{1}{2}}\(\M_0^{-\frac{1}{2}}\M_1\M_0^{-\frac{1}{2}}\)^t\M_0^{\frac{1}{2}},\quad
 t\in[0,1].
\end{eqnarray}
\end{definition}
Note that $\M_t \in \mathcal{P}(p), t\in[0,1]$. Geodesic curves play a role on the smooth Riemannian manifolds similar to that of line segments in Euclidean spaces. Defining geodesic curves on a manifold is closely related to endowing the manifold with specific metric (or inner product) and taking these curves as the shortest paths between points of $\mathcal{P}(p)$ in this metric.
\begin{definition}
 A set ${\mathcal{N}} \subset \mathcal{P}(p)$ is called $g$-convex if for any $\M_0,\M_1\in
 \mathcal{N}$ the geodesic $\M_t$ lies in $\mathcal{N}$.
\end{definition}
%In order to provide more intuition let us note that if we endow the set $\mathcal{P}(p)$ with the Frobenius norm, its geodesic curves will be the usual line segments with the regular notion of convexity.
\begin{definition}
Given a $g$-convex subset $\mathcal{N} \subset \mathcal{P}(p)$,
we say that a function $f$ is $g$-convex on $\mathcal{N}$ if for any two matrices
$\M_0,\M_1 \in \mathcal{N}, f(\M_t)\leqslant tf(\M_0)+(1-t)f(\M_1), \forall t \in [0,1]$.
\end{definition}

\begin{proof}[Proof of Theorem \ref{g_th}]
As we have already noted $\M = \K^H \M \K$ is equivalent to $\M$ and $\K$ being commutative. Now, assume $\M_0, \M_1 \in \mathcal{P}(p)^\mathcal{G}$. Let us show that the geodesic (\ref{geodesics}) lies in $\mathcal{P}(p)^\mathcal{G}$. Choose $\K \in \mathcal{G}$, $\M_0 \K = \K \M_0, \M_1 \K = \K \M_1$. Let $\M \in \mathcal{S}(p)$ then it is diagonalizable and let $\M=\Q^H\bm\Lambda\Q$ be its spectral decomposition, where $\bm\Lambda$ is diagonal and $\Q \in \mathcal{U}(p)$. Given a smooth function $f\colon\mathbb{C} \to \mathbb{C}$, we extend its action to $\mathcal{S}(p)$ in the functional way as $f(\M) = \Q^Hf(\bm\Lambda)\Q$, where $f$ acts on the diagonal entries of $\bm\Lambda$ elementwise.
$\M \in \mathcal{S}(p)$ commutes with $\P$ iff $f(\M)$ commutes with $\P$ for any smooth function $f$, also if two matrices $\A$ and $\B$ commute with $\P$, then their product $\A\B$ commutes with $\P$.
This implies that $\M_0^{-\frac{1}{2}}\M_1\M_0^{-\frac{1}{2}}$ commutes with $\K$,
thus $\(\M_0^{-\frac{1}{2}}\M_1\M_0^{-\frac{1}{2}}\)^t$ also commutes with $\K$ and the whole $\M_t$ commutes with $\K$. Therefore, the geodesic (\ref{geodesics}) lies in $\mathcal{P}(p)^\mathcal{G}$ and the set $\mathcal{P}(p)^\mathcal{G}$ is $g$-convex.
\end{proof}

\section{}
\label{cor1_app}
\begin{proof}[Proof of Corollary \ref{lemma:isotypicBound}]
The proof is by induction on $n$. Formula (\ref{comp_size}) from Lemma \ref{spanlem} provides the case $n=1$. Let $\x\in \X$, $L = \sspan{\mathcal{G}\x}$, $K = L^\perp$, and $\Y$ - be the projection of $\X\setminus\{\x\}$ onto $K$. Since $\X$ is independent, we may assume that $K$ is non-random and $\Y$ is continuously distributed on $K$. Then, $K = \mathbb C^{p'}$ with $p' = \sum_{i=1}^m s'_i p_i$, and $s'_i = s_i - \min[s_i,p_i]$ (if $s'_i = 0$ we drop the corresponding summand). Now (\ref{cor_f_e}) follows by induction applied to $\Y$ in $\mathbb C^{p'}$. And (\ref{scm_g_d}) follows immediately.
\end{proof}

\section{}
\label{lem2_app}
To clearly separate the main idea of the proof of Lemma \ref{geom_lem} from the technical details given in the auxiliary lemmas below, we provide a brief sketch of the proof emphasizing the most important steps. Consider a converging sequence $\{\bm\Theta_k\}_k \subset \mathcal{M}^\mathcal{G}$ and assume we can find an orthonormal basis $\Y = \{\y_1,\dots,\y_p\}$ in which $\bm\Theta_k$ read as 
\begin{equation}
\Y^H\bm\Theta_k\Y = 
\begin{pmatrix}
	{\lambda \I_d}&{0}\\
	{0}&{\I_{p-d}}\\
\end{pmatrix},\;\;\text{with}\;\; 
\lambda \to 0,\;\;\text{as}\;\; k\to \infty.
\end{equation}
Let $Y = \langle \y_1,\ldots \y_d\rangle$ and $K = Y^\perp$. We want to compute the main asymptotic term of $F(\bm\Theta_k)$, as $k\to \infty$ and to show it tends to positive infinity. For each $\x\in \X$, we have $\x^H\bm\Theta_k^{-1}\x \asymp \lambda^{-1}|\Pi_{Y}(\x)|^2$, where we write $a(\bm\Theta_k) \asymp b(\bm\Theta_k)$, if $\frac{a(\bm\Theta_k)}{b(\bm\Theta_k)} \to 1$ when $k \to \infty$. This implies that only $\x\notin K$ contribute to the asymptotic of $F(\bm\Theta_k)$. Namely,
\begin{equation}
\sum_{\x\in \X\setminus K} \log\;{\x^H\bm\Theta_k^{-1}\x} \asymp |\X\setminus K|\log\, \lambda^{-1} = \(n - |\X\cap K|\)\log\,\lambda^{-1},
\end{equation}
\begin{equation}
\frac{1}{p}\log\;|\bm\Theta_k| \asymp -\frac{d}{p} \log\,\lambda^{-1}= -\(1 - \frac{\dim{K}}{p}\)\log\,\lambda^{-1},
\end{equation}
thus,
\begin{equation}
F(\bm\Theta_k) \asymp \(\frac{\dim K}{p} - \frac{|\X\cap K|}{|\X|}\) \log\, \lambda^{-1}, \;\; \lambda\to 0+.
\end{equation}
If the expression in brackets is positive, $F(\bm\Theta_k)$ tends to $+\infty$ as $k\to \infty$. The requirement in Lemma \ref{geom_lem} is a variation of this condition adapted to the group symmetric setting. There is, however, a number of technical challenges, which we treat in the following auxiliary propositions. One of them is that we cannot in general find a basis in which all $\bm\Theta_k$ are diagonalizable and we do this only for the limiting point $\widetilde{\bm\Theta}$ of the sequence $\{\bm\Theta_k\}_k$. Second, following the remark after Lemmas \ref{geom_lem} and \ref{lemma:asInequality}, since $F(\bm\Theta_k)$ is a random function, the sequence $\{\bm\Theta_k\}_k$ must also be random, resulting in that the subspace $L$ appearing in Lemmas \ref{lemma:intersectBound} is random. In addition, when $\bm\Theta_k$ approach the boundary $\mathcal{M}^\mathcal{G}$, their eigenvalues may tend to zero with different rates, which may complicate the treatment. We show that these eigenvalues approaching zero split into groups in such a way that in each group having fixed rate of convergence, the corresponding eigenvectors form a $\mathcal{G}$-invariant subspace. These subspaces form flags appearing in Lemmas \ref{lemma:SAdditiv} and \ref{lemma:PartitCoeff}. Finally, we determine those eigenvalues that contribute to the main asymptotic term of $F(\bm\Theta_k)$ and calculate their number.

\begin{definition}
Let $\X\subseteq \mathbb{C}^p$ be a finite subset, $\mathcal F = \{\mathbb{C}^p=V_1\supsetneq V_2\supsetneq \ldots V_s\supsetneq V_{s+1} \supseteq 0\}$ be a flag (a sequence of proper subspaces) of length $s$ on $\mathbb{C}^p$, define
\begin{equation}
\Delta(\mathcal F, \X)_{i\,j} = \dim V_{i} - \dim V_j -\frac{\dim V_1}{|\X|}\left(|\X\cap V_i| - |\X\cap V_j|\right),
\end{equation}
where $1\leqslant i,\,j\leqslant s+1$. In addition, given a decreasing sequence
\begin{equation}
\r =\{r_1>\ldots>r_s\} \subset \mathbb{R}
\label{seq_r}
\end{equation}
of length $s$, define
\begin{equation}
S(\mathcal F, \X, \r) = \sum_{i=1}^s r_i \Delta(\mathcal F,\X)_{i\,i+1}.
\end{equation}
\end{definition}

\begin{lemma}\label{lemma:SAdditiv}
Let $\X\subseteq \mathbb{C}^p$ be a finite subset and $\mathcal F$ be a flag of length $s$ on $\mathbb{C}^p$, then
\begin{equation}
\Delta(\mathcal F, \X)_{i\,j} + \Delta(\mathcal F, \X)_{j\,k} = \Delta(\mathcal F, \X)_{i\,k},\;\; i,j,k=1,\dots s+1.
\end{equation}
\end{lemma}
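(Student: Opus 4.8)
The plan is to recognize that $\Delta(\mathcal{F},\X)_{ij}$ is a pure difference: it factors as a single-index quantity evaluated at $i$ minus the same quantity evaluated at $j$. Concretely, I would introduce the auxiliary function
\begin{equation}
\phi(i) = \dim V_i - \frac{\dim V_1}{|\X|}\,|\X \cap V_i|,\qquad i = 1,\dots,s+1,
\end{equation}
which depends on one index only. Reading off the definition of $\Delta$ immediately gives
\begin{equation}
\Delta(\mathcal{F},\X)_{ij} = \phi(i) - \phi(j).
\end{equation}

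With this reformulation the claim reduces to a telescoping cancellation, which is the single key step:
\begin{equation}
\Delta(\mathcal{F},\X)_{ij} + \Delta(\mathcal{F},\X)_{jk} = \bigl(\phi(i) - \phi(j)\bigr) + \bigl(\phi(j) - \phi(k)\bigr) = \phi(i) - \phi(k) = \Delta(\mathcal{F},\X)_{ik},
\end{equation}
valid for all $i,j,k \in \{1,\dots,s+1\}$. No property of the flag beyond each $V_i$ having a well-defined dimension and a well-defined intersection cardinality with $\X$ is used; the constant prefactor $\dim V_1/|\X|$ is identical in every term, so it contributes nothing beyond bookkeeping.

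There is essentially no obstacle: the statement is a purely algebraic identity, and the only thing to observe is the harmless fact that $\Delta$ separates into an $i$-part and a $j$-part. Should one prefer to avoid introducing $\phi$, one would simply expand both $\Delta$'s on the left-hand side term by term and cancel the two occurrences of $\dim V_j$ against each other and the two occurrences of $|\X \cap V_j|$ against each other; the two routes are identical. The reason this lemma is worth isolating is not its difficulty but its downstream role: it is precisely this additivity that permits decomposing $S(\mathcal{F},\X,\r)$ along the successive steps of the flag and reassembling the dominant contributions to $F(\bm\Theta_k)$ in the proof of Lemma \ref{geom_lem}.
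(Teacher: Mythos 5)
Your proof is correct and takes essentially the same route as the paper, whose entire proof is the single line ``Follows immediately from the definition.'' Your factorization $\Delta(\mathcal F,\X)_{i\,j}=\phi(i)-\phi(j)$ followed by the telescoping cancellation is precisely that immediate verification made explicit.
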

\begin{proof}
Follows immediately from the definition.
\end{proof}

\begin{lemma}\label{lemma:PartitCoeff}
Let $\X\subseteq \mathbb{C}^p$ be a finite subset, $\mathcal F$ be a flag of length $s$ on $\mathbb{C}^p$, $\r$ be a sequence as in (\ref{seq_r}), and $\Delta(\mathcal F, \X)_{1\,i} < 0$ for all $i=2,\dots, s+1$. Then, there is a subflag $\mathcal F'\subseteq \mathcal F$ and a subsequence $\r'\subseteq \r$, both of length $t\leqslant s$ such that
\begin{equation}
S(\mathcal F,\X,\r)\leqslant S(\mathcal F',\X,\r'),
\end{equation}
\begin{equation}
\Delta(\mathcal F',\X)_{1\,2}<0,\;\; \text{and}\;\;\Delta(\mathcal F',\X)_{i\,i+1} \leqslant 0,\;i=2,\dots,t.
\end{equation}
In particular, $S(\mathcal F,\X,\r)< 0$.
\end{lemma}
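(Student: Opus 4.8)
The plan is to push everything through the real partial sums $D_k := \Delta(\mathcal F,\X)_{1,\,k+1}$, with $D_0=0$, and to run the whole argument on the sequence $(D_0,\dots,D_s)$. By the additivity of Lemma \ref{lemma:SAdditiv} one has $\Delta(\mathcal F,\X)_{i,j}=D_{j-1}-D_{i-1}$, so the single steps are $d_i:=\Delta(\mathcal F,\X)_{i,i+1}=D_i-D_{i-1}$, and the hypothesis $\Delta(\mathcal F,\X)_{1,i}<0$ for $i=2,\dots,s+1$ is exactly $D_k<0$ for $1\le k\le s$. Since every subflag I build will retain the top space $V_1=\mathbb C^p$, the factor $\dim V_1/|\X|$ is untouched and a subflag's $\Delta$-values are again differences of the same $D_k$'s; thus choosing a subflag and a subsequence $\r'$ merely rebrackets the sum. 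A single summation by parts, $S(\mathcal F,\X,\r)=\sum_{i=1}^{s}r_i(D_i-D_{i-1})=r_sD_s+\sum_{i=1}^{s-1}(r_i-r_{i+1})D_i$, then does most of the work: each term of the last sum is a positive gap $r_i-r_{i+1}$ times a negative $D_i$, so $S(\mathcal F,\X,\r)\le r_sD_s$. Because in the underlying eigenvalue-collapse argument the $r_i$ are positive decay rates, $r_sD_s<0$, which already yields the closing assertion $S(\mathcal F,\X,\r)<0$ (this positivity is genuinely needed, since for arbitrary decreasing real $\r$ one can have $S>0$ even with all $D_k<0$).

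To manufacture the refined subflag with $\Delta(\mathcal F',\X)_{1,2}<0$ and $\Delta(\mathcal F',\X)_{a,a+1}\le0$ for $a\ge2$, I would induct on $s$, the base $s=1$ being immediate since $D_1<0$ and $\mathcal F'=\mathcal F$ works. For the step I split on the sign of the last step $d_s=D_s-D_{s-1}$. If $d_s\le0$, I apply the inductive hypothesis to the truncated flag $V_1\supsetneq\dots\supsetneq V_s$ with the shortened sequence $\{r_1,\dots,r_{s-1}\}$, then re-append $V_{s+1}$ carrying the leftover coefficient $r_s$: the new final step is exactly $d_s\le0$, the appended $r_s$ is below all retained coefficients so $\r'$ stays strictly decreasing, and both $S(\mathcal F,\X,\r)$ and the subflag objective grow by $r_sd_s$, so the inequality survives. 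If $d_s>0$, I instead \emph{drop} $V_s$ and pass to $V_1\supsetneq\dots\supsetneq V_{s-1}\supsetneq V_{s+1}$, whose partial sums are $D_1,\dots,D_{s-2},D_s$ (all negative) and whose merged last step is $d_{s-1}+d_s$; assigning it the sequence $\{r_1,\dots,r_{s-1}\}$ changes the objective by $(r_{s-1}-r_s)d_s>0$, so the merge can only raise $S$, after which the inductive hypothesis on a flag of length $s-1$ finishes the job. In both cases the output subflag lies inside $\mathcal F$ and the output coefficients form a subsequence of $\r$.

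The only real obstacle is the bookkeeping that keeps the two requirements compatible, namely $S(\mathcal F,\X,\r)\le S(\mathcal F',\X,\r')$ together with the step-sign conditions, while $\r'$ must be an \emph{honest} subsequence of $\r$ rather than a freely chosen decreasing vector; the case $d_s>0$ is exactly where this bites, and it is defused by the elementary merge gain $(r_{s-1}-r_s)d_s>0$. I also note that, the statement being existential, it is in fact already witnessed by the trivial one-block subflag $\{V_1\supsetneq V_{s+1}\}$ with $\r'=\{r_s\}$: its single step is $D_s<0$, the conditions for $a\ge2$ are vacuous, and the bound $S\le r_sD_s$ established above reads precisely as $S\le S(\mathcal F',\X,\r')$. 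The inductive construction is nonetheless the more informative route, since it exposes which groups of collapsing eigenvalues persist into the leading asymptotics of $F$, which is what the surrounding proof of Lemma \ref{geom_lem} ultimately consumes.
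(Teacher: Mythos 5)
Your proposal is correct, and it takes a genuinely different---and shorter---route than the paper's. The paper proves Lemma \ref{lemma:PartitCoeff} by induction on $s$: it locates the first index $i$ with $\Delta(\mathcal F,\X)_{i\,i+1}>0$ (necessarily $i\geqslant 2$, since $\Delta(\mathcal F,\X)_{1\,2}<0$), deletes $V_i$ and $r_i$, and uses $r_{i-1}>r_i$ together with the additivity of Lemma \ref{lemma:SAdditiv} to show that this deletion can only increase $S$; iterating produces a subflag whose increments have the required signs. You instead pass to the partial sums $D_k=\Delta(\mathcal F,\X)_{1\,k+1}$ and perform a single Abel summation, $S(\mathcal F,\X,\r)=r_sD_s+\sum_{i=1}^{s-1}(r_i-r_{i+1})D_i\leqslant r_sD_s$, which shows that the trivial two-term subflag $\{V_1\supsetneq V_{s+1}\}$ with $\r'=\{r_s\}$ already witnesses every conclusion of the lemma (the sign conditions for $i\geqslant 2$ are vacuous at $t=1$, and your observation that retaining $V_1$ keeps the normalization $\dim V_1/|\X|$ intact is exactly the check needed for $\Delta(\mathcal F',\X)_{1\,2}=D_s$). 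This one-line argument is complete as it stands and arguably cleaner than the paper's, since it avoids any case analysis. Two remarks. First, your alternative inductive construction has a small bookkeeping wrinkle: in the case $d_s\leqslant 0$ you re-append $V_{s+1}$ and assert the new final increment ``is exactly $d_s$,'' but the inductive hypothesis, as stated in the lemma, does not guarantee that the subflag produced for $V_1\supsetneq\dots\supsetneq V_s$ ends with $V_s$; you would need to strengthen the induction to ``the subflag retains the last subspace of the input flag'' (a property your construction, like the paper's, in fact preserves)---but since the Abel argument already proves the lemma, this is cosmetic. Second, you rightly flag that the closing claim $S(\mathcal F,\X,\r)<0$ requires $r_s>0$, which (\ref{seq_r}) does not formally impose; the paper's proof carries the same implicit assumption (its base case asserts $r_1\Delta(\mathcal F,\X)_{1\,2}<0$), and positivity does hold at the unique place the lemma is invoked, in the proof of Lemma \ref{geom_lem}, where the $r_j$ arise as positive decay rates of the collapsing eigenvalues.
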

\begin{proof}
The proof is by induction on $s$. For $s=1$,
\begin{equation}
S(\mathcal F,\X,\r) = r_1 \Delta(\mathcal F,\X)_{1\,2}<0,
\end{equation}
Let now $s>1$. If for all $i=1,\dots, s,\; \Delta(\mathcal F,\X)_{i\,i+1}\leqslant 0$, then we are done since $\Delta(\mathcal F,\X)_{1\,2} < 0$. Hence, we may assume that there is $i \leqslant s$ such that 
\begin{equation}
\Delta(\mathcal F,\X)_{j\,j+1}\leqslant 0,\; 1\leqslant j<i,\;\; \text{and}\;\; \Delta(\mathcal F,\X)_{i\,i+1} > 0,
\end{equation}
Let $\mathcal F'$ to be $\mathcal F$ without $V_i$ and $\r'$ to be $\r$ without $r_i$, then, 
\begin{multline}
S(\mathcal F,\X,\r) = \sum_{\substack{1\leqslant j\leqslant s\\j\neq i-1,i}}r_j \Delta(\mathcal F,\X)_{j\,j+1} + r_{i-1}\Delta(\mathcal F,\X)_{i-1\,i} \\ + r_i \Delta(\mathcal F,\X)_{i\,i+1} \\ \leqslant  \sum_{\substack{1\leqslant j\leqslant s\\j\neq i-1,i}}r_j \Delta(\mathcal F,\X)_{j\,j+1} + r_{i-1}\Delta(\mathcal F,\X)_{i-1\,i} + r_{i-1} \Delta(\mathcal F,\X)_{i\,i+1} \\ = S(\mathcal F',\X,\r'),
\end{multline}
where in the last equality we use Lemma \ref{lemma:SAdditiv}. Since the length of $\mathcal F'$ is less than that of $\mathcal F$ and $\Delta(\mathcal F',\X)_{1\,j}$ is either $\Delta(\mathcal F,\X)_{1\,j}$ or $\Delta(\mathcal F,\X)_{1\,j+1}$, thus strictly negative, the result follows by induction.
\end{proof}

\begin{proof}[Proof of Lemma \ref{geom_lem}]
Suppose on the contrary, that there exists a sequence $\{\bm\Theta_k\}_k \subset \mathcal{M}^\mathcal{G}$, such that $F(\bm\Theta_k)$ is bounded and
\begin{equation}
\bm\Theta_k \to \widetilde{\bm\Theta} \in \partial \mathcal{M}^\mathcal{G},
\end{equation}
meaning that $\rank{\widetilde{\bm\Theta}} < p$. The spectral decomposition of $\bm\Theta_k$ reads as
\begin{equation}
\bm\Theta_k = \sum_{j=1}^p \lambda_j \y_j\y_j^H,
\end{equation}
where $\y_j$ are orthonormal and $\lambda_j$ are all positive (we drop the dependence on $k$ to shorten the notation). Taking a subsequence of $\{\bm\Theta_k\}_k$, if needed, we may assume that all $\y_j$ and $\lambda_j$ converge to $\tilde\y_j$ and $\tilde \lambda_j$ correspondingly and $\tilde\y_j$ together with $\tilde \lambda_j$ determine a spectral decomposition of the limit $\widetilde{\bm\Theta}$. We will not mention explicitly taking subsequence argument but it is assumed that we do it when it is needed.

Note that some of the eigenvalues $\lambda_j$ tend to zero. Sort them according to their speed of convergence to zero, starting from $\lambda_1$, tending to zero most rapidly. Let $\mu$ be a sequence and $1\leqslant d <p$ be an integer such that $\frac{\log \lambda_j}{\log \mu}\to r_j > 0$ if $j\leqslant d$ and $\frac{\log \lambda_j}{\log \mu} \to 0$ if $j>d$ (such a sequence exists if we pass to a subsequence). Denote the set $\{\lambda_1,\ldots,\lambda_d\}$ by $\mathcal{L}$ and decompose it into a disjoint union $\mathcal{L} = \sqcup_{j=1}^m \mathcal{L}_j$ in such a way that there exist sequences $\mu_j,
\; i=1,\dots, m$ such that $\mu_j \to 0,\;\frac{\mu_j}{\mu_{j+1}}\to 0$, and, for each $\lambda \in \mathcal{L}_j,\;\frac{\lambda}{\mu_j}$ tends to a nonzero constant. The later means that for any $\lambda \in \mathcal{L}_j$, $\log \lambda \asymp r_j\log\mu$. For each $j=1,\dots,m$, define $K_j$ to be the random subspace generated by the eigenvectors corresponding to $\lambda$-s in $\mathcal{L}_j$. These are $\mathcal G$-invariant subspaces. We set $V_j$ to be $(\oplus_{l=1}^{j-1} K_l)^\perp,\; j=1,\dots,m+1$. Then $\mathcal F = \{V_j\}_{j=1}^{m+1}$ is a random flag of $\mathcal G$-invariant subspaces. 

Recall, that we need to prove that
\begin{equation}
F(\bm\Theta_k) = \log |\bm\Theta_k| + \frac{p}{n} \sum_{i=1}^n \log\(\x_i^H\bm\Theta_k^{-1}\x_i\) = \text{I} + \text{II},
\label{fd}
\end{equation}
tends to $+\infty$. Let us extract the main asymptotic term from $\text{I}$
\begin{multline}
\text{I} = \log |\bm\Theta_k| \asymp \sum_{j=1}^m\sum_{\lambda\in \mathcal{L}_j}\log \lambda \asymp \sum_{j=1}^m r_j|\mathcal{L}_j|\log  \mu \\ = \sum_{j=1}^m r_i(\dim V_j - \dim V_{j+1})\log  \mu.
\end{multline}
For $\text{II}$ we have
\begin{multline}
\text{II} = \frac{p}{n} \sum_{i=1}^n \log\(\x_i^H\bm\Theta_k^{-1}\x_i\) = \frac{p}{n}\sum_{j=1}^m \sum_{\x\in V_j\setminus V_{j+1}} \log \(\x^H\bm\Theta_k^{-1}\x\) \\
+ \frac{p}{n}\sum_{j=1}^m \sum_{\x\in V_{m+1}} \log \(\x^H\bm\Theta_k^{-1}\x\).
\end{multline}
Note that for any $\x \neq 0$,
\begin{equation}
\log \(\x^H \bm\Theta_k^{-1}\x\) \geqslant \log \frac{\norm{\x}^2}{\norm{\bm\Theta_k}_2} \geqslant \log\frac{\norm{\x}^2}{\Tr{\bm\Theta_k}} = 2\log\norm{\x} > -\infty,
\end{equation}
hence, we may ignore the summands with $\x\in V_{m+1}$ since they only improve the asymptotic. 

Now suppose that we are given $\lambda \in \mathcal{L}_j$ and consider the sequence $\lambda\bm\Theta_k^{-1}$. This sequence diverges, however, the limit of the sequence of the restricted operators $\lambda\bm\Theta_k^{-1}|_{V_j}$ exists and will be denoted by $\P_j$. Clearly, $\P_j$ is a composition of the orthogonal projector onto $K_j$ and a positive operator on $K_j$. Therefore, for any $\x\in V_j\setminus V_{j+1}$,
 \begin{equation}
\x^H\lambda\bm\Theta_k^{-1}\x \to \x^H \P_j \x> 0.
 \end{equation}
We obtain that asymptotically $\text{II}$ is not less than
\begin{align}
&\frac{p}{n}\sum_{j=1}^m \sum_{\x\in V_j\setminus V_{j+1}} \log \(\x^H\bm\Theta_k^{-1}\x\) \\
&=  \frac{p}{n}\sum_{j=1}^m \sum_{\x\in V_j\setminus V_{j+1}}\left(-\log\lambda +  \log \(\x^H\lambda\bm\Theta_k^{-1}\x\)\right) \nonumber
\\
&\asymp -\frac{p}{n}\sum_{j=1}^m \sum_{\x\in V_j\setminus V_{j+1}}\log \mu_{j} \asymp -\frac{p}{n}\sum_{j=1}^m r_j\left(|\X\cap V_{j}| - |\X\cap V_{j+1}|\right) \log \mu, \nonumber
\end{align}
here $\lambda$ belongs to the corresponding $\mathcal{L}_j$. Therefore, if the following expression is not zero, the leading term of $F$ is not smaller asymptotically than
\begin{multline}
\sum_{j=1}^m r_j\(\dim V_{j} - \dim V_{j+1} - \frac{p}{n}\(|\X\cap V_{j}| - |\X\cap V_{j+1}|\)\) \log\mu \\ =  S(\mathcal F,\X,\r)\log\mu.
\end{multline}
Note that
\begin{multline}
\Delta(\mathcal F,\X)_{1\,j} =
\dim V_1 - \dim V_j - \frac{\dim \mathbb{C}^p}{|\X|}(|\X\cap V_1| - |\X\cap V_j|
\\
=\dim V_1 \left(\frac{|\X \cap V_j|}{|\X|} - \frac{\dim V_j}{\dim \mathbb{C}^p}\right)<0,
\end{multline}
now Lemma \ref{lemma:PartitCoeff} yields that $S(\mathcal F,\X,\r)<0$ and $F(\bm\Theta_k)$ tends to infinity. This contradiction with the choice of $\bm\Theta_k$ completes the proof.
\end{proof}

\section{}
\label{lem3_app}
\begin{lemma}\label{lemma:intersectBound}
Let $L \subset \mathbb{C}^p$ be a random proper $\mathcal{G}$-invariant subspace and $d_i = \rank{\Pi_i}/p_i$, where $\Pi_i$ is the $i$-th block of matrix $\Pi_{L}$, as in (\ref{shur_f}), then for any continuously distributed independent $\X \subset \mathbb{C}^p$
\begin{equation}
|\X\cap \mathcal {L}| \leqslant \min\left\{ \left.\frac{d_i}{p_i}\;\right|\; d_i<s_i \right\},\;\; \text{a.s.}
\end{equation}
\end{lemma}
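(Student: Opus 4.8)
The plan is to show that the points of $\X$ lying in $L$ cannot be too many, because together they generate, under the group action, a $\mathcal{G}$-invariant subspace sitting inside $L$ whose block dimensions are already pinned down by Corollary \ref{lemma:isotypicBound}. First I would record the structural facts. Since $L$ is $\mathcal{G}$-invariant, its orthogonal projector $\Pi_L$ commutes with $\mathcal{G}$, hence $\Pi_L\in\mathcal{S}(p)^\mathcal{G}$, and in the basis of Theorem \ref{rep_basis} it is block diagonal with $\Pi_i=\I_{p_i}\otimes\Pi_{U_i}$ for an orthogonal projector $\Pi_{U_i}$ onto a subspace $U_i\subseteq\mathbb{C}^{s_i}$; thus $d_i=\rank{\Pi_i}/p_i=\dim U_i$. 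The monotonicity I would use is that for two $\mathcal{G}$-invariant subspaces $L'\subseteq L$ one has, block by block, $\rank{\Pi'_i}\leqslant\rank{\Pi_i}$, i.e. the $i$-th multiplicity of $L'$ is at most $d_i$ (the projectors are simultaneously block diagonal, so the Loewner inequality $\Pi_{L'}\leqslant\Pi_L$ holds blockwise and rank is monotone). Because $L$ is proper, at least one block has $d_i<s_i$, so the minimum in the statement is taken over a nonempty index set.

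The main obstacle is that $L$ is random and is allowed to depend on $\X$, so the subset $\X\cap L$ is selected by the data and its points are not a priori ``generic'' — one cannot invoke Corollary \ref{lemma:isotypicBound} for them directly. I would resolve this with a finite union bound over subsets. Since $\X$ is finite, it has only $2^{|\X|}$ subsets, and for each fixed subset $S\subseteq\X$ its elements are independent and continuously distributed, so Corollary \ref{lemma:isotypicBound} applies to $S$ almost surely: the $i$-th block of $\Pi_{\sspan{\mathcal{G}S}}$ has rank $p_i\min[s_i,|S|p_i]$. Intersecting these finitely many almost-sure events yields a single event $E$ of probability one on which the conclusion of Corollary \ref{lemma:isotypicBound} holds simultaneously for \emph{every} subset of $\X$. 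This is exactly the device that neutralizes the data-dependent choice of $L$: whatever $L$ turns out to be, the subset $\X\cap L$ is one of these finitely many subsets, already controlled on $E$.

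It then remains to carry out the counting argument on $E$. Set $\X'=\X\cap L$ and $\ell=|\X'|$. Since $\X'\subseteq L$ and $L$ is $\mathcal{G}$-invariant, $\mathcal{G}\X'\subseteq L$, whence $\sspan{\mathcal{G}\X'}\subseteq L$; by the monotonicity above the $i$-th multiplicity of $\sspan{\mathcal{G}\X'}$ is at most $d_i$. On $E$ that multiplicity equals $\min[s_i,\ell p_i]$, so
\begin{equation}
\min[s_i,\ell p_i]\leqslant d_i,\qquad i=1,\dots,m.
\end{equation}
For any index with $d_i<s_i$ the minimum cannot equal $s_i$, hence it must equal $\ell p_i$, giving $\ell p_i\leqslant d_i$, i.e. $\ell\leqslant d_i/p_i$. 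Taking the minimum over all such indices yields $|\X\cap L|=\ell\leqslant\min\{d_i/p_i\mid d_i<s_i\}$ on the probability-one event $E$, which is the claim. I expect the only delicate point to be the union-bound step of the second paragraph; the remaining representation-theoretic bookkeeping and the case analysis on $\min[s_i,\ell p_i]$ are routine.
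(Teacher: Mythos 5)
Your proof is correct and follows essentially the same route as the paper's: apply Corollary \ref{lemma:isotypicBound} to $\X\cap L$, use $\sspan{\mathcal{G}(\X\cap L)}\subseteq L$ together with blockwise rank monotonicity of nested $\mathcal{G}$-invariant subspaces, and conclude via the case analysis on $\min[s_i,\ell p_i]$ at the indices with $d_i<s_i$. Your explicit union bound over the finitely many subsets of $\X$, which neutralizes the data-dependence of $L$, is a worthwhile refinement that the paper leaves implicit when it invokes the corollary directly for the random subset $\Y=\X\cap L$.
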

\begin{proof}
Let $\Y = \X\cap L$, denote $t = |\Y|$ and arrange the indices in such a way that $d_i<s_i$ for $i = 1,\dots,k$ and $d_i = s_i$ for $i>k$. Then condition $\sspan{\mathcal{G}\Y}\subset L$  implies that $\rank{\Pi_{\sspan{\mathcal{G}\Y},i}} \leqslant \rank{\Pi_i}$ for each $i$. For $i>k$, this condition is trivial because $\Pi_i$ is of the maximal rank. Corollary \ref{lemma:isotypicBound} yields
\begin{equation}
\rank{\Pi_{\sspan{\mathcal{G}\Y},i}} = t p_i^2 \leqslant p_id_i = \rank{\Pi_i}<s_ip_i,\;\; \text{a.s.}\;\forall \;i\leqslant k.
\end{equation}
Hence,
\begin{equation}
|\X\cap L| = |\Y| = t \leqslant \frac{d_i}{p_i},\;\; \text{a.s.}\;\forall\;i\leqslant k.
\end{equation}
\end{proof}

\begin{proof}[Proof of Lemma \ref{lemma:asInequality}]
Let $\Pi_i$ and $d_i$ be as in Lemma \ref{lemma:intersectBound}, and arrange the indices so that $d_i < s_i$ for $i\leqslant k$ and $d_i = s_i$ for $i>k$. From Lemma \ref{lemma:intersectBound},
\begin{equation}
\frac{|\X \cap L|}{n} \leqslant \frac{\min_{i=1}^k\frac{d_i}{p_i}}{n},\;\;\text{a.s.}
\end{equation}
Now it is enough to show that
\begin{equation}
\frac{\min_{i=1}^k\frac{d_i}{p_i}}{n} < \frac{\dim L}{p},\;\; \text{a.s.}
\label{f_1}
\end{equation}
Note that
\begin{equation}
\dim L = \sum_{i=1}^k p_id_i+ \sum_{i=k+1}^m p_is_i,
\end{equation}
therefore, (\ref{f_1}) is equivalent to
\begin{equation}
\frac{p \min_{i=1}^k\frac{d_i}{p_i}}{\sum_{i=1}^k p_id_i+ \sum_{i=k+1}^m p_is_i} < n,\;\; \text{a.s.}
\end{equation}
Due to
\begin{equation}
\min_{1\leqslant i \leqslant k}\frac{d_i}{p_i} \leqslant \min_{1\leqslant i \leqslant k}\frac{d_i}{s_i}\max_{1\leqslant i \leqslant k}\frac{s_i}{p_i},
\end{equation}
\begin{equation}
\sum_{i=1}^k p_id_i \geqslant \min_{1\leqslant i \leqslant k}\frac{d_i}{s_i}\sum_{i=1}^k p_is_i,
\end{equation}
it is enough to show that
\begin{equation}
\frac{p\min_{i=1}^k\frac{d_i}{s_i}\max_{i=1}^k\frac{s_i}{p_i}}{\min_{i=1}^k\frac{d_i}{s_i}\sum_{i=1}^k p_is_i + \sum_{i=k+1}^m p_is_i} < n,\;\; \text{a.s.}
\end{equation}
Since $0\leqslant d_i < s_i$, replace $\min_{i=1}^k d_i/s_i$ by $t < 1$ and we should prove that
\begin{equation}
\frac{p t \max_{i=1}^k\frac{s_i}{p_i}}{t\sum_{i=1}^k p_is_i + \sum_{i=k+1}^m p_is_i} < n,\;\; \text{a.s.}
\end{equation}
The left-hand side achieves its maximum at $t=1$, thus, with probability one, it is not greater than
\begin{equation}
\frac{p \max_{i=1}^k\frac{s_i}{p_i}}{\sum_{i=1}^m p_is_i} = \frac{p \max_{i=1}^k\frac{s_i}{p_i}}{p} = \max_{1\leqslant i \leqslant k}\frac{s_i}{p_i}.
\end{equation}
Since $n > s_i/p_i$ for any $i$, the statement follows.
\end{proof}

\section{}
\label{perf_app}
In this section For $n$ instances $a_1,\dots,a_n$ of scalars, vectors, matrices or functions, denote by $\wideparen{a}$ their arithmetic average, when the index of summation is obvious from the context.

\begin{lemma}(Vector Bernstein Inequality) \cite{yurinskiui1976exponential}
\label{v_bernstein_th}
Let $\bm\xi_1,\dots,\bm\xi_n \in \mathbb{C}^k$  be i.i.d zero-mean random vectors and suppose there exist $\sigma, \nu >0$ such that
\begin{equation}
\mathbb{E}\norm{\bm\xi_1}^r \leqslant \frac{r!}{2}\sigma^2 \nu^{r-2},\;\; r=2,3,\dots,
\end{equation}
then for $t\geq0$
\begin{equation}
\mathbb{P}\(\norm{\wideparen{\bm\xi}} \geqslant t \sigma\) \leq
2\exp\(\frac{-nt^2}{2(1+1.7t\frac{\nu}{\sigma})}\).
\end{equation}
\end{lemma}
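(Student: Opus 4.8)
The plan is to establish this dimension-free concentration bound via the classical exponential (Chernoff) method combined with a Doob martingale linearization of the norm, which is the route that keeps the estimate free of the ambient dimension $k$. Writing $S_n = \sum_{i=1}^n \bm\xi_i$, the event $\{\norm{\wideparen{\bm\xi}}\geqslant t\sigma\}$ is exactly $\{\norm{S_n}\geqslant nt\sigma\}$, so by the exponential Markov inequality it suffices to control the moment generating function $\mathbb{E}[e^{\lambda\norm{S_n}}]$ for $\lambda\in(0,1/\nu)$ and then optimize over $\lambda$. Two quantities must be bounded: the centered MGF of $\norm{S_n}$ and the mean $\mathbb{E}\norm{S_n}$. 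The prefactor $2$ in the statement makes the right-hand side exceed $1$ for small $t$, rendering the inequality trivial in precisely the small-deviation regime where the mean correction below is not negligible; for larger $t$ that correction is genuinely lower order, so I would treat the two regimes uniformly by absorbing the mean term into the constants.

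First I would record the scalar Bernstein ingredient: for a centered real variable $X$ with $\mathbb{E}|X|^r\leqslant \tfrac{r!}{2}s^2 w^{r-2}$ and $0\leqslant\lambda<1/w$, Taylor-expanding the exponential and summing the resulting geometric series gives
\begin{equation}
\mathbb{E}[e^{\lambda X}]\leqslant \exp\!\left(\frac{\lambda^2 s^2}{2(1-\lambda w)}\right).
\end{equation}
The difficulty is that $\norm{\cdot}$ is nonlinear, so $\mathbb{E}[e^{\lambda\norm{S_n}}]$ does not factor over the independent summands. To circumvent this I would introduce the Doob martingale $M_i=\mathbb{E}[\norm{S_n}\mid\bm\xi_1,\dots,\bm\xi_i]$, interpolating between $M_0=\mathbb{E}\norm{S_n}$ and $M_n=\norm{S_n}$. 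Each increment $D_i=M_i-M_{i-1}$ is a centered function of $\bm\xi_i$ alone, and since $\norm{\cdot}$ (hence each conditional expectation of it) is $1$-Lipschitz, $|D_i|$ is controlled by $\norm{\bm\xi_i}+\mathbb{E}\norm{\bm\xi_i}$. Thus the moment hypothesis on $\bm\xi_i$ is inherited by $D_i$ up to an explicit constant inflation, which is where the numerical constant $1.7$ originates.

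Applying the scalar lemma conditionally to each $D_i$ and telescoping via the tower property yields
\begin{equation}
\mathbb{E}\!\left[e^{\lambda(\norm{S_n}-\mathbb{E}\norm{S_n})}\right]\leqslant \exp\!\left(\frac{n\lambda^2\sigma^2}{2(1-c\lambda\nu)}\right)
\end{equation}
for a constant $c$ absorbing the inflation above. The mean term is handled by $\mathbb{E}\norm{S_n}\leqslant(\mathbb{E}\norm{S_n}^2)^{1/2}=(\sum_i\mathbb{E}\norm{\bm\xi_i}^2)^{1/2}\leqslant\sqrt{n}\,\sigma$, which follows from the $r=2$ case and independence. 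Combining the Chernoff bound, the martingale MGF estimate and the mean bound gives
\begin{equation}
\mathbb{P}(\norm{S_n}\geqslant nt\sigma)\leqslant \exp\!\left(-\lambda nt\sigma+\lambda\sqrt{n}\,\sigma+\frac{n\lambda^2\sigma^2}{2(1-c\lambda\nu)}\right),
\end{equation}
and the choice $\lambda\approx t/(\sigma(1+ct\nu/\sigma))$ minimizes the exponent and produces the claimed Bernstein form $\tfrac{nt^2}{2(1+1.7\,t\nu/\sigma)}$.

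The main obstacle will be the second paragraph: verifying that the martingale differences $D_i$ genuinely inherit a Bernstein-type moment bound from $\bm\xi_i$, and then tracking the constant inflation carefully enough that the optimization over $\lambda\in(0,1/\nu)$ lands on precisely the stated coefficient $1.7$ and prefactor $2$. The nonlinearity of $\norm{\cdot}$ is the only nonroutine point, and it is exactly what the Lipschitz/Doob argument neutralizes while preserving the dimension-free character; everything else is the standard Chernoff-plus-optimize bookkeeping.
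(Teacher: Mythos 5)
A point of reference first: the paper offers no proof of this lemma at all --- it is imported verbatim from Yurinskii's exponential inequality (the citation \cite{yurinskiui1976exponential} is the proof), so there is no internal argument to compare against. Your sketch is in fact a reconstruction of Yurinskii's own route: Chernoff bound for $\norm{S_n}$, Doob martingale $M_i=\mathbb{E}[\norm{S_n}\mid\mathcal{F}_i]$, conditional Bernstein MGF bounds for the increments, the Hilbert-space identity $\mathbb{E}\norm{S_n}^2=\sum_i\mathbb{E}\norm{\bm\xi_i}^2\leqslant n\sigma^2$ for the mean term, and optimization in $\lambda$. The strategy is the right one.

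However, there is a genuine gap at precisely the step you defer, and it is not mere constant bookkeeping. (a) If you control the increments by $|D_i|\leqslant\norm{\bm\xi_i}+\mathbb{E}\norm{\bm\xi_i}$ and let the moment hypothesis be ``inherited up to constant inflation,'' then via $(a+b)^r\leqslant 2^{r-1}(a^r+b^r)$ the pair $(\sigma,\nu)$ inflates to roughly $(2\sigma,2\nu)$, and the denominator of the exponent becomes of the form $2\left(4+c\,t\nu/\sigma\right)$ rather than $2\left(1+1.7\,t\nu/\sigma\right)$: the leading constant $1$, i.e.\ the exact sub-Gaussian behavior in the regime $t\nu/\sigma\to 0$, is destroyed. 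The coefficient $1.7$ does \emph{not} originate from this inflation; it originates from keeping the conditional second moment of $D_i$ at exactly $\sigma^2$ and confining all loss to the $r\geqslant 3$ terms. The standard repair is the decoupling $D_i=f_i-\mathbb{E}[f_i\mid\mathcal{F}_{i-1}]$ with $f_i=\mathbb{E}\!\left[\norm{S_n}-\norm{S_n-\bm\xi_i}\,\middle|\,\mathcal{F}_i\right]$; this equals $D_i$ because $S_n-\bm\xi_i$ is independent of $\bm\xi_i$, and it gives $|f_i|\leqslant\norm{\bm\xi_i}$, hence $\mathbb{E}[D_i^2\mid\mathcal{F}_{i-1}]\leqslant\mathbb{E}\norm{\bm\xi_1}^2\leqslant\sigma^2$, so only the higher-order terms pick up constants. (b) The absorption of the mean term is likewise not an asymptotic triviality: at the optimal $\lambda\approx\frac{t/\sigma}{1+ct\nu/\sigma}$ the correction $\lambda\sqrt{n}\,\sigma$ compares to the main exponent $\frac{nt^2}{2(1+ct\nu/\sigma)}$ in the ratio $\approx 2/(t\sqrt{n})$, and the bound is nontrivial only once $t\sqrt{n}\gtrsim 1.18$; just above that threshold the correction is \emph{comparable} to the main term, not lower order, and it is exactly the prefactor $2$ together with Yurinskii's explicit computation that rescues the stated form. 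With (a) repaired and (b) carried out quantitatively, your outline becomes the standard proof; as written, it establishes a bound of the right shape but with a strictly worse denominator than the lemma claims.
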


For two matrices $\A$ and $\B$ of the same sizes, $\A \bullet \B$ denotes their Hadamard (elementwise) product. Given a group $\mathcal{G}$, we write $\M(\mathcal{G})$ for the mask matrix of the block-diagonal structure associated with it, having ones inside the corresponding blocks and zero otherwise.

\begin{lemma}
\label{u_mom}
Let $\x_i \sim \mathcal{U}(\I) , i=1,\dots,n$ then
\begin{equation}
\mathbb{E}\[\norm{\wideparen{\x\x^H}-\frac{1}{p}\I}_F^r\] \leqslant 1,\;\; r=2,3,\dots,
\label{xi_l_1}
\end{equation}
and for any group $\mathcal{G} \leqslant U(p)$,
\begin{equation}
\mathbb{E}\norm{\M(\mathcal{G})\bullet(\wideparen{\x\x^H}-\frac{1}{p}\I)}_F^r \leqslant \rho(\mathcal{G}).
\end{equation}
\begin{proof}
Define $n$ centered random vectors
\begin{equation}
\bm\xi_i = \vecc{\x_i\x_i^H-\frac{1}{p}\I},\;\; i=1,\dots,n,
\end{equation}
and consider their norm powers
\begin{multline}
\norm{\bm\xi_1}^r = \Tr{\(\x_1\x_1^H-\frac{1}{p}\I\)^2}^{\frac{r}{2}} \\
= \Tr{\(1-\frac{2}{p}\)\x_1\x_1^H+\frac{1}{p^2}\I}^{\frac{r}{2}} = \(1-\frac{1}{p}\)^{\frac{r}{2}} < 1,
\label{v_n_b}
\end{multline}
and (\ref{xi_l_1}) follows. Now define new $n$ centered random vectors
\begin{equation}
\bm\eta_i = \vecc{\M(\mathcal{G})\bullet\(\wideparen{\x\x^H}-\frac{1}{p}\I\)},\;\; i=1,\dots,n.
\end{equation}
Partition $\x$ as
\begin{equation}
\x = \begin{pmatrix}\x^1\\\vdots\\\x^l\end{pmatrix},
\end{equation}
according to the block-diagonal structure of $\M(\mathcal{G})$. Then
\begin{align}
&\norm{\M(\mathcal{G})\bullet\(\x\x^H-\frac{1}{p}\I_p\)}_F^2 = \sum_{l} \norm{\x^l[\x^l]^H-\frac{1}{p}\I_{s_l}}_F^2\nonumber\\
&=\sum_{l} \Tr{\(\x^l[\x^l]^H-\frac{1}{p}\I_{s_l}\)\(\x^l[\x^l]^H-\frac{1}{p}\I_{s_l}\)} \nonumber\\ &=\sum_{l} \(\norm{\x^l}^4-\frac{2}{p}\norm{\x^l}^2 + \frac{s_l}{p^2}\) =\(\sum_{l} \norm{\x^l}^4\)-\frac{2}{p}+\frac{1}{p} \nonumber\\
&=\(\sum_{l} \norm{\x^l}^2\)^2-2\sum_{l\neq k}\norm{\x^l}^2\norm{\x^k}^2-\frac{1}{p}\nonumber\\ &=1-2\sum_{l\neq k}\norm{\x^l}^2\norm{\x^k}^2-\frac{1}{p}.
\label{mask_e}
\end{align}
In order to calculate the expectation of (\ref{mask_e}) we only need to compute the subvectors' $\x^l$ and $\x^k$ norms moments. This is easily done by viewing the $p$ dimensional unit complex sphere as a $2p$ dimensional real sphere, with all the (sub)vectors of double dimensions. Apply the following formula from \cite{joarder2008distributions}:
\begin{equation}
\mathbb{E}\(\prod_{m=1}^d y_m^{k_m}\) = \frac{\Gamma(d/2)}{2^k\Gamma((d+k)/2)}\prod_{m=1}^d\frac{k_m!}{(k_m/2)!},
\end{equation}
where $\y=(y_1,\dots,y_d)^T$ is a real unit vector, all $k_m$ are even and $k=\sum_{m=1}^d k_m$ to obtain
\begin{equation}
\mathbb{E}\[\norm{\x^l}^2\norm{\x^j}^2\] = \sum_{a=1}^{2s_l}\sum_{b=1}^{2s_k} \mathbb{E}[y_a^2y_b^2],
\end{equation}
where $\y=[\operatorname{Re}(\x)^T, \operatorname{Im}(\x)^T]^T \in \mathbb{R}^{2p}$.
\begin{equation}
\mathbb{E}\[\norm{\x^l}^2\norm{\x^j}^2\] = 4^2s_ls_k \frac{(p-1)!}{2^4(p+1)!} = \frac{s_ls_k}{p(p+1)}.
\end{equation}
Now the expectation of (\ref{mask_e}) reads as
\begin{align}
&\mathbb{\E}\[\norm{\M(\mathcal{G})\bullet\(\x\x^H-\frac{1}{p}\I_p\)}_F^2\] = 1-\frac{1}{p}-2\sum_{l\neq j} \frac{s_ls_j}{p(p+1)} \nonumber\\
&= 1-\frac{1}{p}- \frac{p^2}{p(p+1)}(1-\rho(\mathcal{G})) =\frac{p^2}{p(p+1)}\rho(\mathcal{G})+\frac{1}{p+1} -\frac{1}{p} \nonumber\\
&\leqslant \rho(\mathcal{G}).
\end{align}
Note that
\begin{equation*}
\norm{\M(\mathcal{G})\bullet\(\x\x^H-\frac{1}{p}\I_p\)}_F \leqslant \norm{\x\x^H-\frac{1}{p}\I_p}_F < 1,
\end{equation*}
due to (\ref{v_n_b}), thus for $r \geqslant 3$,
\begin{align}
&\mathbb{\E}\[\norm{\M(\mathcal{G})\bullet\(\x\x^H-\frac{1}{p}\I_p\)}_F^r\] \nonumber\\
&= \mathbb{\E}\[\norm{\M(\mathcal{G})\bullet\(\x\x^H-\frac{1}{p}\I_p\)}_F^2\norm{\M(\mathcal{G})\bullet\(\x\x^H-\frac{1}{p}\I_p\)}_F^{r-2}\] \nonumber\\
&\leqslant \mathbb{\E}\[\norm{\M(\mathcal{G})\bullet\(\x\x^H-\frac{1}{p}\I_p\)}_F^2\] \leqslant \rho(\mathcal{G}).
\end{align}
\end{proof}
\end{lemma}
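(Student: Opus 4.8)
The plan is to reduce both inequalities to elementary computations on the sphere, exploiting that every sample is a unit vector, so that $\norm{\x}^2=1$ holds deterministically. These two moment bounds are exactly the hypotheses required to feed the Vector Bernstein inequality (Lemma \ref{v_bernstein_th}) to the i.i.d.\ centered summands $\bm\xi_i=\x_i\x_i^H-\frac{1}{p}\I$, so it suffices to control the $r$-th moment of a single centered summand and then transfer to the average.

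First I would dispatch $(\ref{xi_l_1})$. Since $\x$ is a unit vector, $\x\x^H$ is idempotent, whence
$$\norm{\x\x^H - \tfrac{1}{p}\I}_F^2 = \Tr{\x\x^H - \tfrac{2}{p}\x\x^H + \tfrac{1}{p^2}\I} = 1 - \tfrac{1}{p}.$$
This is a deterministic constant, so its $r$-th power equals $(1-1/p)^{r/2}<1$ and the bound holds with no probabilistic content; the triangle inequality then transfers it to the averaged form, since each summand carries the identical norm $\sqrt{1-1/p}$.

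For the masked bound I would first treat $r=2$. Since $\mathcal{U}(\I)$ is unitarily invariant I may partition $\x$ in the block-diagonalizing basis of Theorem \ref{rep_basis} into subvectors $\x^1,\dots,\x^L$ indexed by the diagonal blocks of $\M(\mathcal{G})$ (one $s_l\times s_l$ block per diagonal block of the form $(\ref{shur_f})$, the multiplicities $p_i$ unfolded into repeated blocks). The Hadamard product annihilates every off-block entry, so the squared Frobenius norm decouples as
$$\norm{\M(\mathcal{G})\bullet(\x\x^H - \tfrac{1}{p}\I)}_F^2 = \sum_l \norm{\x^l(\x^l)^H - \tfrac{1}{p}\I_{s_l}}_F^2.$$
Each summand expands to $\norm{\x^l}^4 - \frac{2}{p}\norm{\x^l}^2 + \frac{s_l}{p^2}$; using $\sum_l \norm{\x^l}^2 = 1$ and $\sum_l s_l = p$ collapses the total to $\sum_l \norm{\x^l}^4 - \frac{1}{p}$, which I rewrite as $1 - \frac{1}{p} - 2\sum_{l<k}\norm{\x^l}^2\norm{\x^k}^2$.

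Taking expectations, the only nontrivial quantity is the cross moment $\mathbb{E}[\norm{\x^l}^2\norm{\x^k}^2]$ for $l\neq k$, which I would evaluate by realifying the complex unit sphere as the real sphere in $\mathbb{R}^{2p}$ and invoking the closed-form spherical moment formula, obtaining $\frac{s_l s_k}{p(p+1)}$. The decisive bookkeeping identity is $2\sum_{l<k}s_l s_k = (\sum_l s_l)^2 - \sum_l s_l^2 = p^2 - \rho(\mathcal{G})p^2$, where the sum of squared block sizes equals $\sum_i p_i s_i^2 = \rho(\mathcal{G})p^2$ by the definition of $\rho$. Substituting yields $\mathbb{E}\norm{\M(\mathcal{G})\bullet(\x\x^H - \frac{1}{p}\I)}_F^2 = 1 - \frac{1}{p} - \frac{p^2(1-\rho(\mathcal{G}))}{p(p+1)}$, and a short manipulation (using $\rho(\mathcal{G})\geqslant 1/p$) shows this is at most $\rho(\mathcal{G})$. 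Finally, for $r\geqslant 3$ I would note that masking only deletes entries, so $\norm{\M(\mathcal{G})\bullet(\x\x^H - \frac{1}{p}\I)}_F \leqslant \norm{\x\x^H - \frac{1}{p}\I}_F = \sqrt{1-1/p}<1$ deterministically; the random variable therefore lies in $[0,1)$, giving $\norm{\cdot}_F^r \leqslant \norm{\cdot}_F^2$, so the higher moments inherit the second-moment bound. I expect the genuine effort to lie in the fourth-order cross-moment evaluation on the sphere — pinning down the constant $\frac{s_l s_k}{p(p+1)}$ via the Gamma-function formula — together with the identification $\sum_l s_l^2 = \rho(\mathcal{G})p^2$; everything else is either deterministic or a one-line norm inequality.
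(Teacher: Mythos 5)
Your proposal is correct and follows essentially the same route as the paper's proof: the deterministic computation $\norm{\x\x^H-\frac{1}{p}\I}_F^2 = 1-\frac{1}{p}$ for the unmasked bound, the blockwise decoupling of the masked Frobenius norm, the fourth-order cross moment $\mathbb{E}\[\norm{\x^l}^2\norm{\x^k}^2\]=\frac{s_ls_k}{p(p+1)}$ via realification to the sphere in $\mathbb{R}^{2p}$, the identity $\sum_l s_l^2=\rho(\mathcal{G})p^2$, and the reduction of the case $r\geqslant 3$ to $r=2$ using that the masked norm is deterministically below one. Your added remarks (the triangle-inequality transfer to the averaged form, and the explicit unfolding of multiplicities in $\sum_l s_l^2$) only make explicit what the paper leaves implicit, and your invocation of $\rho(\mathcal{G})\geqslant 1/p$ is harmless though not actually needed in the final inequality.
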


\begin{lemma}
\label{main_lem}
Let $\Z_1,\dots,\bm\Z_n \in \mathcal{S}(p)$  be i.i.d zero-mean random matrices and suppose there exist $\sigma, \nu>0$ such that
\begin{equation}
\mathbb{E}\norm{\Z_1}_F^r \leqslant \frac{r!}{2}\sigma^2 \nu^{r-2}, r=2,3,\dots,
\end{equation}
then for $t\geq0$
\begin{equation}
\mathbb{P}\(\norm{\wideparen{\Z}}_F \geqslant t \sigma\) \leq
2\exp\(\frac{-nt^2}{2(1+1.7t\frac{\nu}{\sigma})}\).
\end{equation}
If in addition $\U \in \mathcal{S}(p)$ possesses a sparsity pattern $\M$ and
\begin{equation}
\mathbb{E}\norm{\M\bullet\Z_1}_F^r \leqslant \alpha \mathbb{E}\norm{\Z_1}_F^r, r=2,3\dots,
\end{equation}
then
\begin{equation}
\mathbb{P}\(|\Tr{\U\wideparen{\Z}}| \geqslant t\sqrt{\alpha}\sigma\norm{\U}_F\) \leqslant 2\exp\(\frac{-nt^2}{2(1+1.7t\frac{\nu}{\sqrt{\alpha}\sigma})}\).
\end{equation}
\end{lemma}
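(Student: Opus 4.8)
The plan is to derive both inequalities from the Vector Bernstein Inequality (Lemma~\ref{v_bernstein_th}) by identifying $\mathcal{S}(p)$ with a complex Euclidean space through vectorization, so that the matrix statement becomes a vector statement to which the cited lemma applies directly.

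First I would dispatch the unconstrained bound. Setting $\bm\xi_i = \vecc{\Z_i}$, the Frobenius norm equals the Euclidean norm of the vectorization, hence $\norm{\wideparen{\Z}}_F = \norm{\wideparen{\bm\xi}}$, and the moment hypothesis on $\Z_1$ is verbatim the hypothesis of Lemma~\ref{v_bernstein_th} on $\bm\xi_1$. Applying that lemma reproduces the first inequality with no further work.

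For the masked bound I would reduce to the scalar ($k=1$) instance of the same lemma applied to the i.i.d.\ zero-mean scalars $\zeta_i = \Tr{\U\Z_i}$. The one structural step is that, since $\U$ is supported on the symmetric mask $\M$, only the masked entries of $\Z_i$ survive the trace, i.e.\ $\Tr{\U\Z_i} = \Tr{\U(\M\bullet\Z_i)}$. Cauchy--Schwarz in the Frobenius inner product then gives $|\zeta_i| \leqslant \norm{\U}_F\,\norm{\M\bullet\Z_i}_F$. Taking $r$-th powers, passing to expectations, and chaining the hypotheses $\mathbb{E}\norm{\M\bullet\Z_1}_F^r \leqslant \alpha\,\mathbb{E}\norm{\Z_1}_F^r \leqslant \alpha\frac{r!}{2}\sigma^2\nu^{r-2}$ yields
\begin{equation}
\mathbb{E}|\zeta_1|^r \leqslant \frac{r!}{2}\left(\sqrt{\alpha}\,\sigma\norm{\U}_F\right)^2\left(\nu\norm{\U}_F\right)^{r-2},
\end{equation}
which is exactly the Bernstein moment condition with effective parameters $\tilde\sigma = \sqrt{\alpha}\,\sigma\norm{\U}_F$ and $\tilde\nu = \nu\norm{\U}_F$.

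Finally I would apply the scalar Bernstein inequality to the $\zeta_i$ and substitute $\wideparen{\zeta} = \Tr{\U\wideparen{\Z}}$ together with the identity $\tilde\nu/\tilde\sigma = \nu/(\sqrt{\alpha}\,\sigma)$; the exponent and the threshold $t\tilde\sigma = t\sqrt{\alpha}\,\sigma\norm{\U}_F$ then collapse precisely to the claimed expression. I expect no deep obstacle here: the only place the mask hypothesis is genuinely used is the reduction $\Tr{\U\Z} = \Tr{\U(\M\bullet\Z)}$, and the sole point demanding care is the bookkeeping that matches the Cauchy--Schwarz moment estimate to the exact Bernstein parametrization $(\tilde\sigma,\tilde\nu)$, so that the ratio $\nu/(\sqrt{\alpha}\,\sigma)$ appears cleanly in the denominator.
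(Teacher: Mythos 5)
Your proof is correct and takes essentially the same route as the paper: both arguments rest on Lemma \ref{v_bernstein_th} combined with the mask identity $\Tr{\U\wideparen{\Z}} = \Tr{\U(\M\bullet\wideparen{\Z})}$ and Cauchy--Schwarz in the Frobenius inner product. The only immaterial difference is ordering --- the paper applies the vector Bernstein bound to the masked matrices $\M\bullet\Z_i$ and then invokes Cauchy--Schwarz on the averaged event, whereas you apply Cauchy--Schwarz per sample to scalarize via $\zeta_i = \Tr{\U\Z_i}$ and use the $k=1$ case of the lemma; both yield the identical effective parameters $\tilde\sigma = \sqrt{\alpha}\,\sigma\norm{\U}_F$ and ratio $\tilde\nu/\tilde\sigma = \nu/(\sqrt{\alpha}\,\sigma)$, hence the same bound.
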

\begin{proof}
The statement follows from the Cauchy-Schwartz inequality and Lemma \ref{v_bernstein_th} since
\begin{equation}
\Tr{\U\wideparen{\Z}} = \Tr{\U(\M\bullet\wideparen{\Z})} \leqslant \norm{\M\bullet\wideparen{\Z}}_F \norm{\U}_F.
\end{equation}
\end{proof}
\begin{corollary}
\label{dim_r_cor}
Let $\U \in \mathcal{S}(p)^\mathcal{G}$ and $\Z_i = p\(\x_i\x_i^H - \frac{1}{p}\I\)$, where $\x_i \sim \mathcal{U}(\I)$, then
\begin{equation}
\mathbb{P}\(|\Tr{\U\wideparen{\Z}}| \geqslant  tp\sqrt{\rho(\mathcal{G})}\norm{\U}_F\) \leqslant 2\exp\(\frac{-nt^2}{2(1+\frac{1.7t}{\sqrt{\rho(\mathcal{G})}})}\).
\end{equation}
\end{corollary}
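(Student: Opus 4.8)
The plan is to collapse the bilinear quantity $\Tr{\U\wideparen{\Z}}$ into the Frobenius norm of a single masked average and then invoke the vector Bernstein inequality (Lemma \ref{v_bernstein_th}) exactly as in the first lines of the proof of Lemma \ref{main_lem}, reading off the Bernstein parameters from the $\rho(\mathcal{G})$-moment bound of Lemma \ref{u_mom}. This last step is what produces the gain factor $\sqrt{\rho(\mathcal{G})}$.

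First I would exploit the group symmetry of $\U$. By the Classification Theorem (Theorem \ref{rep_basis}), in the representation basis every $\U\in\mathcal{S}(p)^\mathcal{G}$ is block-diagonal, so its support lies inside the block mask $\M(\mathcal{G})$; equivalently $\U=\M(\mathcal{G})\bullet\U$. Since $\M(\mathcal{G})$ is symmetric and $0/1$-valued, the Hadamard--trace identity lets the mask be transferred onto the data, $\Tr{\U\wideparen{\Z}}=\Tr{\U(\M(\mathcal{G})\bullet\wideparen{\Z})}$, and Cauchy--Schwarz then gives
\begin{equation}
|\Tr{\U\wideparen{\Z}}| \leqslant \norm{\U}_F\,\norm{\M(\mathcal{G})\bullet\wideparen{\Z}}_F .
\end{equation}
Consequently the event $\{|\Tr{\U\wideparen{\Z}}|\geqslant tp\sqrt{\rho(\mathcal{G})}\norm{\U}_F\}$ is contained in $\{\norm{\M(\mathcal{G})\bullet\wideparen{\Z}}_F\geqslant tp\sqrt{\rho(\mathcal{G})}\}$, so it suffices to control the latter.

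Next I would set $\bm\eta_i=\vecc{\M(\mathcal{G})\bullet\Z_i}$, which are i.i.d. zero-mean random vectors (zero-mean because $\mathbb{E}[\x_i\x_i^H]=\frac{1}{p}\I$ for $\x_i\sim\mathcal{U}(\I)$) satisfying $\norm{\wideparen{\bm\eta}}=\norm{\M(\mathcal{G})\bullet\wideparen{\Z}}_F$, and apply Lemma \ref{v_bernstein_th}. The required moment hypothesis is furnished directly by Lemma \ref{u_mom}: since $\Z_i=p(\x_i\x_i^H-\frac{1}{p}\I)$,
\begin{equation}
\mathbb{E}\norm{\bm\eta_1}^r = p^r\,\mathbb{E}\norm{\M(\mathcal{G})\bullet(\x_1\x_1^H-\tfrac{1}{p}\I)}_F^r \leqslant p^r\rho(\mathcal{G}),\quad r=2,3,\dots.
\end{equation}
Writing $p^r\rho(\mathcal{G})=p^2\rho(\mathcal{G})\cdot p^{r-2}\leqslant\frac{r!}{2}\,p^2\rho(\mathcal{G})\cdot p^{r-2}$ (valid since $r!\geqslant 2$ for $r\geqslant 2$), the Bernstein condition holds with $\sigma=p\sqrt{\rho(\mathcal{G})}$ and $\nu=p$, so that $\nu/\sigma=1/\sqrt{\rho(\mathcal{G})}$ and $t\sigma=tp\sqrt{\rho(\mathcal{G})}$. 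Substituting these into Lemma \ref{v_bernstein_th} yields exactly the claimed tail, and combining it with the inclusion of events from the previous step finishes the argument.

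The Hadamard--trace bookkeeping and the identification of $\sigma,\nu$ are routine; I expect the only substantive point to be the first step, namely recognizing that $\mathcal{G}$-invariance of $\U$ is precisely the statement $\U=\M(\mathcal{G})\bullet\U$ in the representation basis, which is what permits moving the mask onto $\wideparen{\Z}$. I would deliberately feed the $\rho(\mathcal{G})$-moment bound of Lemma \ref{u_mom} straight into vector Bernstein rather than invoking Lemma \ref{main_lem} verbatim with $\alpha=\rho(\mathcal{G})$: the hypothesis of that lemma compares $\mathbb{E}\norm{\M\bullet\Z_1}_F^r$ to $\mathbb{E}\norm{\Z_1}_F^r=p^r(1-\frac{1}{p})^{r/2}$, and no single constant $\alpha$ makes that comparison yield $\sqrt{\alpha}\sigma=p\sqrt{\rho(\mathcal{G})}$, whereas the direct route above is clean and rigorous.
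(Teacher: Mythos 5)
Your proposal follows essentially the same route as the paper's proof: pass to the representation basis of Theorem \ref{rep_basis}, transfer the mask onto the data via the Hadamard--trace identity and Cauchy--Schwarz, and then run the Bernstein machinery on the masked matrices using the $\rho(\mathcal{G})$ moment bound of Lemma \ref{u_mom}. The paper packages the last step as an invocation of Lemma \ref{main_lem} with $\alpha=\rho(\mathcal{G})$, $\sigma=\nu=1$ (applied to the normalized matrices $\x_i\x_i^H-\frac{1}{p}\I$), whereas you feed Lemma \ref{u_mom} directly into Lemma \ref{v_bernstein_th} with $\sigma=p\sqrt{\rho(\mathcal{G})}$, $\nu=p$; your identification of parameters is correct and yields exactly the claimed tail. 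There is, however, one genuine (if small) gap: after changing to the representation basis, the matrices you mask are $\Q^H\Z_i\Q=p\(\Q^H\x_i\x_i^H\Q-\frac{1}{p}\I\)$, while Lemma \ref{u_mom} is stated for $\x_i\sim\mathcal{U}(\I)$ in the standard basis. You need the observation, made explicitly in the paper, that $\Q^H\x_i\sim\mathcal{U}(\I)$ for unitary $\Q$ (equivalently, that $\Q^H\wideparen{\Z}\Q$ has the same distribution as $\wideparen{\Z}$); this holds because $\mathcal{U}(\I)$ is the uniform distribution on the sphere, so it is a one-line fix, but as written your $\bm\eta_i$ conflates the two bases.

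Your closing criticism of the paper's route also overreaches. It is true that the \emph{stated} conclusion of Lemma \ref{u_mom} (the bound $\rho(\mathcal{G})$, uniform in $r$) does not literally verify the hypothesis $\mathbb{E}\norm{\M\bullet\Z_1}_F^r\leqslant\alpha\,\mathbb{E}\norm{\Z_1}_F^r$ against the exact moments $\mathbb{E}\norm{\Z_1}_F^r=p^r(1-\frac{1}{p})^{r/2}$. But a single constant does work, namely $\alpha=\rho(\mathcal{G})$ itself: the computation inside the proof of Lemma \ref{u_mom} gives the exact second moment $\mathbb{E}\norm{\M(\mathcal{G})\bullet(\x\x^H-\frac{1}{p}\I)}_F^2=\frac{p\rho(\mathcal{G})}{p+1}-\frac{1}{p(p+1)}\leqslant\rho(\mathcal{G})\(1-\frac{1}{p}\)$, where the last inequality is equivalent to $\rho(\mathcal{G})\leqslant 1$, and combining this with the a.s.\ bound $\norm{\M(\mathcal{G})\bullet(\x\x^H-\frac{1}{p}\I)}_F\leqslant\sqrt{1-\frac{1}{p}}$ yields $\mathbb{E}\norm{\M\bullet\Z_1}_F^r\leqslant\rho(\mathcal{G})\,p^r(1-\frac{1}{p})^{r/2}$ for all $r\geqslant 2$. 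With $\sigma=\nu=p$ this satisfies Lemma \ref{main_lem} verbatim and gives $\sqrt{\alpha}\sigma=p\sqrt{\rho(\mathcal{G})}$ and $\nu/(\sqrt{\alpha}\sigma)=1/\sqrt{\rho(\mathcal{G})}$, i.e.\ precisely the corollary. So the paper's citation is loose only in that it quotes the weakened conclusion of Lemma \ref{u_mom} rather than the sharper estimate its proof contains; your direct route is a clean and legitimate alternative, but it is a stylistic simplification rather than a necessary repair.
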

\begin{proof}
Let $\Q$ be the orthogonal basis bringing all the matrices in $\mathcal{S}(p)^\mathcal{G}$ to the block-diagonal form with the mask $\M(\mathcal{G})$, according to Theorem \ref{rep_basis}, then
\begin{equation}
\Tr{\U\wideparen{\Z}} = \Tr{\Q^H\U\Q\Q^H\wideparen{\Z}\Q} \leqslant \norm{\U}_F\norm{\M(\mathcal{G})\bullet\Q^H\wideparen{\Z}\Q}_F,
\end{equation}
where we have used the equality $\norm{\Q^H\U\Q}_F = \norm{\U}_F$. Use Lemmas \ref{main_lem} and \ref{u_mom} with $\alpha = \rho(\mathcal{G}),\; \sigma = \nu = 1$,  and note that the distribution of $\Q^H \wideparen{\Z} \Q$ is identical to that of $\wideparen{\Z}$ to get the statement.
\end{proof}

\bibliographystyle{IEEEtran}
\bibliography{ilya_bib}

\end{document}